\documentclass[12pt]{amsart}
\author{Simeon Ball, Guillermo Gamboa and Michel Lavrauw}\thanks{2010 {\it Mathematics Subject Classification.} 94B27, 51E22. \\
The first author acknowledges the support of the project  MTM2017-82166-P of the Spanish {\em Ministerio de Ciencia y Innovaci\'on.}}

\usepackage{amsmath}
\usepackage{amssymb}
\usepackage{amsthm}
\usepackage{graphicx}
\usepackage{amscd}
\usepackage{amsthm,amsmath,amssymb,amsfonts,amscd}
\usepackage{physics}
\usepackage{enumerate}
\usepackage{epstopdf}
\usepackage{booktabs}
\usepackage{subcaption}
\usepackage{array}
\usepackage{multirow}
\usepackage{listings}
\usepackage{url}

 \textwidth 160mm
 \textheight 210mm
 \topmargin 0cm
 \oddsidemargin 0cm
 \evensidemargin 0cm
 \parskip 2mm
 \setlength{\parindent}{0pt}
\newtheorem{theorem}{Theorem}
\newtheorem{lemma}[theorem]{Lemma}

\title{{\rm On additive MDS codes over small fields}}

\usepackage{amsmath}
\usepackage{amssymb}
\usepackage{amsthm}
\usepackage{graphicx}
\usepackage{amscd}

\begin{document}
\baselineskip=17pt
\date{11 December 2020}
\maketitle
\begin{abstract}
Let $C$ be a $(n,q^{2k},n-k+1)_{q^2}$ additive MDS code which is linear over ${\mathbb F}_q$. We prove that if $n \geqslant q+k$ and $k+1$ of the projections of $C$ are linear over ${\mathbb F}_{q^2}$ then $C$ is linear over ${\mathbb F}_{q^2}$. We use this geometrical theorem, other geometric arguments and some computations to classify all additive MDS codes over ${\mathbb F}_q$ for $q \in \{4,8,9\}$. We also classify the longest additive MDS codes over ${\mathbb F}_{16}$ which are linear over ${\mathbb F}_4$. In these cases, the classifications not only verify the MDS conjecture for additive codes, but also confirm there are no additive non-linear MDS codes which perform as well as their linear counterparts. These results imply that the quantum MDS conjecture holds for $q \in \{ 2,3\}$.
\end{abstract}
\section{Introduction}

Let $A$ be a finite set and let $n$ and $k$ be positive integers. An {\em MDS code} $C$ is a subset of $A^n$ of size $|A|^k$ in which any two elements of $C$ differ in at least $n-k+1$ coordinates. In other words, the minimum (Hamming) distance $d$ between any two elements of $C$ is $n-k+1$. In general, we denote a code $C \subseteq A^n$ with minimum distance $d$ as a $(n,|C|,d)_{|A|}$ code. If there is no restriction on the size of $A$ then MDS codes are the best performing codes when we apply nearest neighbour decoding. They have the property that a codeword can be recovered from any $k$ coordinates, which makes them very useful, for example, in distributed storage systems.

The ubiquitous example of an MDS code is the Reed-Solomon code. The Reed-Solomon code is an example of a linear code in which the alphabet is a finite field ${\mathbb F}_q$ and $C$ is a $k$-dimensional subspace of ${\mathbb F}_q^n$. The Reed-Solomon code has length $n=q+1$ which can be extended to a code of length $q+2$ in the case that $k \in \{3,q-1\}$ and $q$ is even. Its codewords are the evaluation of polynomials of degree at most $k-1$. To give a more precise definition, suppose ${\mathbb F}_q=\{a_1,\ldots,a_q \}$. The Reed-Solomon code is
$$
C=\{(f(a_1),\ldots,f(a_{q}), c_f) \ | \ f \in {\mathbb F}_q[X],\ \ \mathrm{deg}\  f \leqslant k-1\}, 
$$
where $c_f$ is coefficient of $X^{k-1}$ in $f$.

There are no known MDS codes which better the Reed-Solomon code and it is generally assumed that there are none. The {\em MDS conjecture} reflects this and states that for an $(n,q^k,d)_q$ MDS code where $d \geqslant 3$, the length $n$ satisfies $n \leqslant q+1$, unless $k \in \{3,q-1\}$ and $q=2^h$ in which case $n \leqslant q+2$. 

The MDS conjecture has been verified for linear codes when $q$ is prime \cite{Ball2012}. It is also known to hold for linear codes when $q$ is square and $k \leqslant c\sqrt{q}$, where the constant $c$ depends on whether $q$ is odd or even. And for $q$ non-square and $k \leqslant c'\sqrt{pq}$, where again the constant $c'$ depends on whether $q$ is an odd power of an even or odd prime. See \cite{BL2019} for a recent survey. It is also known to hold for all MDS codes over alphabets of size at most $8$, see \cite{KO2016}. Here we present some evidence that the MDS conjecture is true for additive MDS codes over finite fields by proving the conjecture for additive MDS codes over ${\mathbb F}_9$ and ${\mathbb F}_{16}$, where in the last case we assume linearity over ${\mathbb F}_4$. 

An MDS code over $A$ with $d=1$ is $A^k$ and with $k=1$ is the repetition code, so these are trivial. For $d=2$, an MDS code is equivalent to a Latin $k$-cube of order $|A|$, so we only consider MDS codes with $d \geqslant 3$. For alphabets of size $2$, there are no non-trivial MDS codes with $d \geqslant 3$ and for alphabets of size $3$, the only non-trivial MDS code with $d \geqslant 3$ is the unique $(4,3^2,3)_3$ code. Alderson \cite{Alderson2006} classified all MDS codes over alphabets of size $4$ by proving the uniqueness of the $(6,4^3,4)_4$ code and of the $(5,4^3,3)_4$ code. The non-existence of two mutually orthogonal Latin squares of order 6 implies the non-existence of nontrivial MDS codes with $d \geqslant 3$ over alphabets of size $6$. In the articles \cite{KDO2015} and \cite{KO2016}, all MDS codes over alphabets of size $5$, $7$ and $8$ are classified. It turns out that all MDS codes over alphabets of size $5$ and $7$ with $d\geqslant 3$, except the $(4, 7^2,3)_7$ codes, are equivalent to linear MDS codes. Here, for the sake of completeness, we also classify additive MDS codes over ${\mathbb F}_{8}$ and compare this classification to the results obtained in \cite{KO2016}. 

The fact that there are at most $q-1$ mutually orthogonal Latin squares of order $q$, implies that if there is an $(n,q^k,n-k+1)_q$ MDS code then $n \leqslant q+k-1$. This is known as the trivial upper bound and is due to Bush \cite{Bush1952}.

\section{Quantum MDS codes}

A {\em quantum code} on {\em $n$ subsystems} is a $K$-dimensional subspace of $({\mathbb C}^q)^{\otimes n}$. A code with minimum distance $d$ is able to detect errors, which act non-trivially on the code space, on up to $d-1$ of the subsystems and correct errors on up to $\frac{1}{2}(d-1)$ of the subsystems. If the dimension $K=q^k$ for some $k$ then we say the quantum code is an $[\![ n,k,d ]\!] _q$ code and if not simply an $ (\!( n,K,d)\!)_q$ code. 

To be able to describe in more detail quantum error-correcting codes which are of interest to us here, we 
specify the type of errors they can correct. Firstly, we suppose that $q$ is the power of a prime $p$. Let $\{ \ket{x} \ | \ x \in {\mathbb F}_q\}$ be a basis of ${\mathbb C}^q$. We define the following set of endomorphisms of ${\mathbb C}^q$ called the {\em Pauli operators}. For each $a,b\in {\mathbb F}_q$, we define $X(a)$ by its action on the basis vectors, $X(a)\ket{x}=\ket{x+a}$, and likewise $Z(b)$ by $Z(b)\ket{x}=e^{2\pi i \mathrm{tr}_{q \rightarrow p}(bx)/p} \ket{x}$, where 
$$
\mathrm{tr}_{q \rightarrow p}(x)=x+x^p+x^{p^2}+\cdots+x^{q/p}
$$
denotes the usual trace map from ${\mathbb F}_q$ to ${\mathbb F}_p$. The Pauli operators are of the form $X(a)Z(b)$, for some $a,b \in {\mathbb F}_q$. In the error model, the (Pauli) errors on $({\mathbb C}^q)^{\otimes n}$ are tensor products of Pauli operators. An error has weight $t$ if precisely $t$ of the components in the tensor product are not the identity operator, whilst the remaining $n-t$ are the identity operator.  A quantum error-correcting code of minimum distance $d$ is able to correct all Pauli errors of weight at most $\frac{1}{2}(d-1)$ which act non-trivially on the code subspace. Such quantum error-correcting codes are most commonly constructed by taking the joint eigenspace of eigenvalue $1$ of a subgroup of Pauli operators. These codes are called {\em stabiliser} codes.  See \cite{KKKS2006} for more details on stabiliser codes. The quantum Singleton bound states that for an $ [\![ n,k,d ]\!] _q$ quantum code, $k \leqslant n-2d+2$. A code attaining this bound is called a {\em quantum MDS code}. 

One of our motivations for studying additive MDS codes is the following theorem, Theorem~\ref{ketkarthm}. The notation $C^{\perp_a}$ is used to describe the orthogonal complement of $C$ with respect to the form, defined for $u,v \in {\mathbb F}_{q^2}^n$ by,  
$$
(u,v)_a=\mathrm{tr}_{q \rightarrow p} (\gamma( u\cdot v^q-u^q \cdot v)),
$$
for some $\gamma$ such that $\gamma^q=-\gamma$.

The following is \cite[Theorem 15]{KKKS2006} applied to MDS codes.

\begin{theorem} \label{ketkarthm}
A $[\![n,n-2(d-1),d]\!]_q$ stabiliser MDS code exists if and only if there is an additive $(n,q^{2(d-1)},n-d+2)_{q^2}$ MDS code $C$ such that $C \leqslant C^{\perp_a}$.
\end{theorem}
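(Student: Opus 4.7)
The plan is to specialise the general correspondence \cite[Theorem 15]{KKKS2006} to the MDS case. That correspondence says an $[\![n,k,d']\!]_q$ stabiliser code is equivalent to an additive $C \subseteq {\mathbb F}_{q^2}^n$ with $|C|=q^{n-k}$, $C \leqslant C^{\perp_a}$, and quantum minimum distance $d'$ equal to the minimum Hamming weight of $C^{\perp_a}\setminus C$ (which coincides with the minimum weight of $C^{\perp_a}\setminus\{0\}$ when the code is pure). Substituting $k=n-2(d-1)$ forces $|C|=q^{2(d-1)}$, so the remaining work is translating the ``MDS'' hypothesis into a Hamming-distance statement about $C$ itself.

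For the forward direction I would invoke Rains's theorem that every quantum MDS code is pure; then the minimum weight of $C^{\perp_a}\setminus\{0\}$ is exactly $d$. Since $|C^{\perp_a}|=(q^2)^{n-(d-1)}$, the Singleton bound over ${\mathbb F}_{q^2}$ gives distance $\leqslant d$, with equality forced by purity, so $C^{\perp_a}$ is an additive MDS code of distance $d$. One then transfers the MDS property to $C$ using the standard fact that the trace-Hermitian dual of an additive MDS code is MDS (a MacWilliams-identity argument, exploiting non-degeneracy of $(\cdot,\cdot)_a$), concluding that $C$ is an additive $(n,q^{2(d-1)},n-d+2)_{q^2}$ MDS code.

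The converse is symmetric. Given $C$ additive MDS with $C \leqslant C^{\perp_a}$, the same duality shows that $C^{\perp_a}$ is additive MDS of distance $d$, so in particular the minimum weight on $C^{\perp_a}\setminus C$ is at least $d$. Applying \cite[Theorem 15]{KKKS2006} produces a stabiliser code with parameters $[\![n,n-2(d-1),\geqslant d]\!]_q$, and the quantum Singleton bound pins the distance at exactly $d$.

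The crux is assembling the two external inputs---Rains's purity theorem for quantum MDS codes and MacWilliams duality for additive MDS codes---and checking they combine cleanly. Without purity, the quantum minimum distance only controls $C^{\perp_a}\setminus C$ and cannot be promoted to the Hamming distance of $C^{\perp_a}$; without additive MDS duality, one cannot freely swap statements between $C$ and $C^{\perp_a}$. Once both are in hand, the theorem is a bookkeeping specialisation of \cite[Theorem 15]{KKKS2006}, and I expect the only real care needed is in stating and citing the purity/duality results precisely for the trace-Hermitian form used here.
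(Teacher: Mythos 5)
Your proposal is correct and takes essentially the same route as the paper, which states this result simply as \cite[Theorem 15]{KKKS2006} applied to MDS codes and supplies no further proof. The details you add to justify the specialisation---purity of quantum MDS codes to promote the minimum weight of $C^{\perp_a}\setminus C$ to that of $C^{\perp_a}$, MDS duality for the trace-alternating form to transfer the property to $C$, and the quantum Singleton bound to pin the distance in the converse---are exactly the bookkeeping the citation leaves implicit.
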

Thus, by ruling out the existence of additive MDS codes over ${\mathbb F}_{q^2}$, one can prove the non-existence of the corresponding stabiliser MDS code.

\section{Additive codes over a finite field}

Recall that we use the notation $(n,q^k,d)_{q}$ to denote a code of size $q^k$ which is a subset of $A^n$, where $|A|=q$ and in which the minimum distance is $d$. The parameter $n$ is the {\em length} of the code. If $q$ is a prime power and the code is linear over ${\mathbb F}_q$ then we say that the code is a $[n,k,d]_q$ code, in which case $k$ is the dimension of the code.

If $A$ is an abelian group then we define an additive code to be a code $C$ with the property that for all $u,v \in C$, we have $u+v \in C$.

\begin{theorem} \label{linearoverfp}
Let $q=p^h$ where $p$ is prime. An additive code $C \subseteq {\mathbb F}_q^n$ is a subspace over ${\mathbb F}_p$.
\end{theorem}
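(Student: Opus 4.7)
The plan is to exploit the fact that $\mathbb{F}_q$ has characteristic $p$, so the ambient abelian group $\mathbb{F}_q^n$ is already an elementary abelian $p$-group, and hence carries a natural $\mathbb{F}_p$-vector space structure in which $\mathbb{F}_p$-scalar multiplication coincides with iterated addition. Once this is observed, being an additive subgroup automatically upgrades to being an $\mathbb{F}_p$-subspace.

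Concretely, I would proceed as follows. First, view $\mathbb{F}_q$ as an $h$-dimensional vector space over $\mathbb{F}_p$; then $\mathbb{F}_q^n$ is an $hn$-dimensional $\mathbb{F}_p$-vector space, where for $m \in \mathbb{F}_p = \{0,1,\ldots,p-1\}$ and $u \in \mathbb{F}_q^n$ the scalar action is $m \cdot u = u + u + \cdots + u$ ($m$ summands). Since $C$ is additive, it is closed under this iterated sum operation; together with closure under addition, this shows $C$ is closed under all $\mathbb{F}_p$-linear combinations. To conclude $C$ is a subspace, I also need $0 \in C$: take any $u \in C$, then $p \cdot u = 0 \in C$ because $\mathbb{F}_q^n$ has characteristic $p$. (If one prefers to allow $C = \emptyset$ as a trivial case, this can be handled as a remark.)

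There is essentially no obstacle: the statement is immediate from characteristic $p$ arithmetic, and the only thing worth emphasising is the identification between the $\mathbb{F}_p$-scalar action and repeated addition, which is precisely why additive codes over fields of prime characteristic are automatically $\mathbb{F}_p$-linear. The proof is therefore a few lines, and the interest of the statement lies in what it implies, namely that subsequently one may always assume $C$ has an $\mathbb{F}_p$-basis and is determined by such a basis inside $\mathbb{F}_q^n$.
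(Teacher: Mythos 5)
Your proposal is correct and follows essentially the same route as the paper: identify $\mathbb{F}_p$-scalar multiplication with iterated addition, so that additive closure immediately yields closure under $\mathbb{F}_p$-linear combinations. Your explicit note that $0 = p\cdot u \in C$ is a small completeness point the paper leaves implicit, but the argument is the same.
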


\begin{proof}
Let $u \in C$. Summing $n$ times the codeword $u$, we have that $nu \in C$, for all $n \in {\mathbb F}_p$. Since, by assumption, $u+v \in C$ for all $u,v \in C$, the code $C$ is a subspace over ${\mathbb F}_p$.
\end{proof}

We remark that if the field ${\mathbb F}_q$ has other proper subfields apart from ${\mathbb F}_p$ then there are interesting subsets of additive codes over ${\mathbb F}_q$ which are linear over the larger subfield. For example, we will be particularly interested in additive codes over ${\mathbb F}_{16}$, which are linear over ${\mathbb F}_4$, and have the parameters of a linear code over ${\mathbb F}_{16}$. For this reason, we shall fix the base field as ${\mathbb F}_q$ and consider additive $(n,q^{kh},d)_{q^h}$ codes which are linear over ${\mathbb F}_q$.

The {\em weight} of a vector $v$ is the number of non-zero coordinates that it has. The {\em minimum weight} of a code $C$ is the minimum weight of the non-zero vectors of $C$.

\begin{lemma} \label{minweight}
The minimum weight $w$ of an additive code $C \subseteq {\mathbb F}_{q^h}^n$ is its minimum distance $d$. 
\end{lemma}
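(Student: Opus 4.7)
The plan is to reduce the lemma to the standard fact that, for any subspace of $\mathbb{F}_{q^h}^n$ over any subfield, minimum distance coincides with minimum weight. The key input has already been supplied by Theorem~\ref{linearoverfp}: although the definition of an additive code only asserts closure under addition, that theorem upgrades $C$ to a genuine $\mathbb{F}_p$-subspace of $\mathbb{F}_{q^h}^n$. In particular, $0 \in C$ and $u-v \in C$ whenever $u,v \in C$.

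First I would prove the inequality $d \leqslant w$. Pick a nonzero codeword $c \in C$ of minimum weight, so $w(c)=w$. Since $0 \in C$ and $c \neq 0$, the pair $\{0,c\} \subseteq C$ witnesses $d \leqslant \mathrm{dist}(0,c)=w(c)=w$.

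Next I would prove the reverse inequality $w \leqslant d$. Choose distinct codewords $u,v \in C$ with $\mathrm{dist}(u,v)=d$. The Hamming distance is translation invariant, so $\mathrm{dist}(u,v)=w(u-v)$. By the $\mathbb{F}_p$-linearity provided by Theorem~\ref{linearoverfp}, the vector $u-v$ lies in $C$, and it is nonzero because $u \neq v$. Hence $w \leqslant w(u-v)=d$, and combining the two inequalities gives $w=d$.

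There is no real obstacle here; the only subtle point is that the hypothesis ``additive'' literally only gives closure under addition, and one has to invoke the preceding theorem to obtain closure under subtraction. Once that is in place the argument is the same one-line verification used for linear codes.
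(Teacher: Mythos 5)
Your proof is correct and follows essentially the same route as the paper: both directions are the standard translation argument, using $0\in C$ for $d\leqslant w$ and $u-v\in C$ for $w\leqslant d$. The only cosmetic difference is that you cite Theorem~\ref{linearoverfp} for closure under negation, whereas the paper derives $-u=(p-1)u\in C$ directly from additivity; these amount to the same observation.
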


\begin{proof}
Suppose that $q$ is the power of the prime $p$, that $u,v \in C$ and that the distance between $u$ and $v$ is $d$. Since $C$ is additive $(p-1)u=-u \in C$. Thus, $v-u \in C$. Since $v$ and $u$ differ in $d$ coordinates $v-u$ has weight $d$. Thus, $d\geqslant w$.

Suppose that $u \in C$ is a codeword of weight $w$. Since $0 \in C$, the distance between $u$ and $0$ is at least $d$, we have that $w \geqslant d$. 
\end{proof}

We denote by $\mathrm{PG}(k-1,q)$ the $(k-1)$-dimensional projective space over ${\mathbb F}_q$. This geometry has as points the one-dimensional subspaces of ${\mathbb F}_q^k$. For $i \in \{1,\ldots,k\}$, an $(i-1)$-dimensional subspace of $\mathrm{PG}(k-1,q)$ is given by an $i$-dimensional subspace $U$ of ${\mathbb F}_q^k$, and consists of the points whose corresponding one-dimensional subspace is contained in $U$. The action of the general linear group $\mathrm{GL}(k,q)$ on the points of $\mathrm{PG}(k-1,q)$  induces the projective general linear group $\mathrm{PGL}(k,q)$. This extends to $\mathrm{P}\Gamma\mathrm{L}(k,q) \cong \mathrm{PGL}(k,q) \rtimes \mathrm{Gal}({\mathbb F}_q / {\mathbb F}_p$) which is the full automorphism group of $\mathrm{PG}(k-1,q)$.

A {\em generator matrix} for a $(n,q^{hk},d)_{q^h}$ code $C$, which is linear over ${\mathbb F}_q$, is an $hk \times n$ matrix with entries from ${\mathbb F}_{q^h}$, whose row space over ${\mathbb F}_{q}$ is $C$. 

Let $\{ \epsilon_i \ | \ i=1,\ldots, h\}$ denote a basis for ${\mathbb F}_{q^h}$ over ${\mathbb F}_{q}$.

We will take a geometrical approach by associating a geometric object to an additive code. 

 Let $C$ be an additive $(n,q^{kh},d)_{q^h}$ code which is linear over ${\mathbb F}_q$ with a generator matrix $\mathrm{G}$. A column of $\mathrm{G}$ is a vector $v$ of ${\mathbb F}_{q^h}^{hk}$, for which we can write 
$$
v=\sum_{i=1}^{h} \epsilon_i v_i,
$$
for some $v_i \in {\mathbb F}_q^{hk}$. We can view the subspace spanned (over ${\mathbb F}_q$) by 
$$\{ v_i \ | \ i\in \{1,\ldots,h\}\}$$
as a subspace of $\mathrm{PG}(kh-1,q)$ which has (projective) dimension at most $h-1$.

Let $\mathcal X$ be the multi-set of these subspaces, so $\mathcal X$ is a multi-set of $n$ subspaces of $\mathrm{PG}(kh-1,q)$ each element of which is a subspace of dimension at most $h-1$.

Vice-versa, given a multi-set of subspaces of of $\mathrm{PG}(kh-1,q)$ of dimension at most $h-1$, after fixing a basis for the space, we can construct an additive code $C$ by reversing the above process. 

\begin{theorem} \label{genarcMDS}
Let $C$ be an additive $(n,q^{kh},d)_{q^h}$ code which is linear over ${\mathbb F}_q$ and let $\mathcal X$ be the multi-set of subspaces of $\mathrm{PG}(kh-1,q)$ obtained from a generator matrix $\mathrm{G}$ for $C$, as described above. A hyperplane of $\mathrm{PG}(kh-1,q)$ contains at most $n-d$ elements of $\mathcal X$ and some hyperplane contains exactly $n-d$ elements of $\mathcal X$. Vice-versa, if we have such a set of subspaces then we can reverse the process and obtain an additive $(n,q^{kh},d)_{q^h}$ code which is linear over ${\mathbb F}_q$.
\end{theorem}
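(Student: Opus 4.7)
The plan is to translate the condition of minimum distance $d$ directly into an incidence condition between hyperplanes of $\mathrm{PG}(hk-1,q)$ and the multi-set $\mathcal X$. Since the rows of $\mathrm{G}$ generate $C$ over ${\mathbb F}_q$, every codeword has the form $r\mathrm{G}$ with $r\in {\mathbb F}_q^{hk}$, and by Lemma \ref{minweight} it suffices to count the number of vanishing coordinates of such codewords.

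Expanding the $j$-th column of $\mathrm{G}$ as $v^{(j)}=\sum_{i=1}^{h} \epsilon_i v^{(j)}_i$ with $v^{(j)}_i \in {\mathbb F}_q^{hk}$, the $j$-th coordinate of $r\mathrm{G}$ equals
$$
r\cdot v^{(j)} = \sum_{i=1}^{h}\epsilon_i\bigl(r\cdot v^{(j)}_i\bigr).
$$
Because $\{\epsilon_i\}$ is an ${\mathbb F}_q$-basis of ${\mathbb F}_{q^h}$ and each inner product $r\cdot v^{(j)}_i$ lies in ${\mathbb F}_q$, this sum vanishes if and only if $r\cdot v^{(j)}_i = 0$ for every $i$. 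Equivalently, the ${\mathbb F}_q$-span of $\{v^{(j)}_i\}_i$, which is the $j$-th element of $\mathcal X$, is contained in the hyperplane $H_r = \{x\in{\mathbb F}_q^{hk} : r\cdot x=0\}$. Since every hyperplane of $\mathrm{PG}(hk-1,q)$ is of this form for some nonzero $r$, the weight of the codeword $r\mathrm{G}$ equals $n$ minus the number of elements of $\mathcal X$ contained in $H_r$.

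The theorem now follows by reading off both directions of this equivalence: the bound $d$ on the minimum distance corresponds precisely to the statement that every hyperplane contains at most $n-d$ elements of $\mathcal X$, and the fact that $d$ is actually attained corresponds to the existence of a hyperplane meeting the bound. For the converse, the construction of a generator matrix from $\mathcal X$ is given explicitly in the paragraph preceding the theorem, and the same chain of equivalences runs backwards. The only point requiring a brief check is that the resulting map $r\mapsto r\mathrm{G}$ is injective on ${\mathbb F}_q^{hk}$, equivalently that the elements of $\mathcal X$ together span $\mathrm{PG}(hk-1,q)$; this is forced by the hyperplane condition, since otherwise $\mathcal X$ would lie inside a single hyperplane, giving $n-d\geqslant n$. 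I do not anticipate a substantive obstacle: once the incidence dictionary between zero coordinates and hyperplane containment is set up, the argument is essentially a bookkeeping exercise moving between the ${\mathbb F}_q$- and ${\mathbb F}_{q^h}$-linear structures.
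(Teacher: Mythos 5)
Your proposal is correct and follows essentially the same route as the paper: identify each hyperplane with a nonzero $r\in{\mathbb F}_q^{hk}$, observe that the $j$-th coordinate of the codeword $r\mathrm{G}$ vanishes exactly when the $j$-th element of $\mathcal X$ lies in that hyperplane, and then invoke Lemma~\ref{minweight}. Your write-up is in fact slightly more careful than the paper's, since you spell out the basis argument for why $r\cdot v^{(j)}=0$ forces all components $r\cdot v^{(j)}_i$ to vanish, and you explicitly check injectivity of $r\mapsto r\mathrm{G}$ in the converse direction, which the paper leaves implicit.
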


\begin{proof}
For a non-zero vector $a \in {\mathbb F}_{q}^{hk}$, let $\pi_a$ be the hyperplane of $\mathrm{PG}(kh-1,q)$ corresponding to the hyperplane of ${\mathbb F}_{q}^{hk}$ orthogonal to the vector $a$. Let $x \in \mathcal X$ and let $v$ be the corresponding column in $\mathrm{G}$. The hyperplane $\pi_a$ contains $x$ if and only if $v \cdot a=0$. Since $a \cdot \mathrm{G}$ is a codeword it has, by Lemma~\ref{minweight}, at most $n-d$ zeros, which implies that  $v \cdot a=0$ for at most $n-d$ columns of $\mathrm{G}$, which in turn implies that at most $n-d$ subspaces of $\mathcal X$ are contained in $\pi_a$. Moreover, since the minimum distance of $C$ is $d$, there is some hyperplane $\pi_a$ which contains exactly $n-d$ elements of $\mathcal X$. Clearly, we can also reverse the argument, constructing an additive $(n,q^{kh},d)_{q^h}$ code from such a set of subspaces.
\end{proof}

Two codes $C$ and $C'$ over an alphabet $A$ are {\em equivalent} if one can be obtained from the other by a permutation of the coordinates and by permutations of the elements of $A$ in any coordinate.

\begin{theorem} \label{addcodeequiv}
Let $C$ and $C'$ be additive $(n,q^{kh},d)_{q^h}$ codes which are linear over ${\mathbb F}_q$. Let $\mathcal X$ (resp. $\mathcal X'$) be the multi-set of subspaces obtained from a generator matrix $\mathrm{G}$ (resp. $\mathrm{G}'$) for $C$ (resp. $C'$). If there is an element $\sigma \in \mathrm{P}\Gamma\mathrm{L}(kh,q)$ such that $\sigma(\mathcal X)=\mathcal X'$ then $C$ and $C'$ are equivalent.
\end{theorem}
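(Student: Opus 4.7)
The plan is to exhibit a chain of generator-matrix operations, each yielding an equivalent code, that together carry $\mathrm{G}$ to a generator matrix for $C'$. Using the decomposition $\mathrm{P}\Gamma\mathrm{L}(kh,q)=\mathrm{PGL}(kh,q)\rtimes\mathrm{Gal}({\mathbb F}_q/{\mathbb F}_p)$, I would write $\sigma=\bar\tau\cdot\bar\phi$, lift the first factor to $\tau\in\mathrm{GL}(kh,q)$, and lift $\bar\phi$ to a field automorphism $\phi$ of ${\mathbb F}_{q^h}$ (via the same power of $x\mapsto x^p$) whose restriction to ${\mathbb F}_q$ realises $\bar\phi$.

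The first operation is to apply $\phi$ entry-wise to $\mathrm{G}$. Since $\phi$ is a bijection of the alphabet ${\mathbb F}_{q^h}$ applied identically in every coordinate, the new matrix generates a code equivalent to $C$. A direct expansion of a column $v=\sum_i\epsilon_iv_i$, using that $\phi(\epsilon_i)$ expands in the ${\mathbb F}_q$-basis $(\epsilon_m)$ with ${\mathbb F}_q$-coefficients, shows that the subspace attached to the $j$-th column becomes $\phi(U_j)$, so the multi-set becomes $\phi(\mathcal{X})$. The second operation is to left-multiply by $\tau$: this preserves the ${\mathbb F}_q$-row space, so the code is unchanged, while each column transforms by $\tau$, and the multi-set becomes $\tau(\phi(\mathcal{X}))=\sigma(\mathcal{X})=\mathcal{X}'$. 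The equality of multi-sets then furnishes a permutation $\pi\in S_n$ of columns matching subspaces position-by-position with $\mathcal{X}'$; this is a coordinate permutation, yielding again an equivalent code.

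At this stage, at each coordinate $j$, the current matrix and $\mathrm{G}'$ have columns spanning the same ${\mathbb F}_q$-subspace $U_j$, though possibly with different spanning tuples $(v_i)$ and $(v_i')$. The final step is to construct, for each $j$, an ${\mathbb F}_q$-linear bijection $\phi_j:{\mathbb F}_{q^h}\to{\mathbb F}_{q^h}$ such that applying $\phi_j$ entry-wise to the current $j$-th column gives the $j$-th column of $\mathrm{G}'$. Running the step-one calculation in reverse, the existence of $\phi_j$ reduces to finding an invertible matrix $A=(A_{im})$ over ${\mathbb F}_q$ with $v_m'=\sum_iA_{im}v_i$ for every $m$. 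Applying each $\phi_j$ at coordinate $j$ is a bijection of ${\mathbb F}_{q^h}$, hence an alphabet permutation and an equivalence.

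The main obstacle I expect is producing such an invertible $A$ when $\dim_{{\mathbb F}_q}U_j<h$: in that case the spanning tuples are not bases, $A$ is not uniquely determined, and it is not a priori clear that an invertible choice exists. The resolution is to pick direct-sum decompositions ${\mathbb F}_{q^h}=K\oplus W=K'\oplus W'$, where $K$ and $K'$ are the kernels of the ${\mathbb F}_q$-linear surjections onto $U_j$ sending $\epsilon_i\mapsto v_i$ and $\epsilon_i\mapsto v_i'$ respectively, and to define $\phi_j$ as any ${\mathbb F}_q$-isomorphism $K'\to K$ on $K'$ together with the isomorphism $W'\to W$ induced by the two surjections on $W'$. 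The remainder is bookkeeping: checking that each operation preserves equivalence and that their composition takes $\mathrm{G}$ to $\mathrm{G}'$.
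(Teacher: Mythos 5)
Your proof is correct and follows the same route as the paper's: decompose $\sigma$ into its Galois and $\mathrm{PGL}$ parts, realise the former as a bijection of the alphabet applied identically in every coordinate and the latter as a change of generator matrix for the same code, then reorder coordinates to match the multi-sets. The paper's proof is a four-sentence sketch of exactly this and stops there; your final step --- constructing at each coordinate an invertible ${\mathbb F}_q$-linear alphabet permutation $\phi_j$ that reconciles two spanning $h$-tuples of the same subspace $U_j$, via complements of the kernels of the two surjections ${\mathbb F}_{q^h}\to U_j$ when $\dim_{{\mathbb F}_q} U_j<h$ --- is a genuine addition, since $\sigma(\mathcal X)=\mathcal X'$ only matches the columns of the transformed $\mathrm{G}$ with those of $\mathrm{G}'$ up to the subspaces they span, not as vectors. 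That step is needed for a fully rigorous argument and your resolution of it is right.
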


\begin{proof}
An element $\sigma \in \mathrm{P}\Gamma\mathrm{L}(kh,q)$ acts on $\mathcal X$, by applying a field automorphism to the elements of $\mathcal X$ and then by left multiplying $\mathrm{G}$ by a non-singular $kh \times kh$ matrix. Applying a field automorphism to the matrix $\mathrm{G}$ simply permutes the elements of ${\mathbb F}_q$ in each coordinate. Multiplying $\mathrm{G}$ by a non-singular $kh \times kh$ matrix, simply replaces $G$ by another generator matrix for $C$. Ordering the elements of $\mathcal X$ is equivalent to ordering the coordinates of the elements of $C$. 
\end{proof}

Let $\mathcal X$ be a multi-set of subspaces of dimension at most $h-1$ of $\mathrm{PG}(kh-1,q)$. As mentioned before Theorem~\ref{genarcMDS}, if we fix a basis for this space we can obtain an additive code $C$ over ${\mathbb F}_{q^h}$, which is linear over ${\mathbb F}_q$. If we change the basis for $\mathrm{PG}(kh-1,q)$ then we will obtain a code equivalent to $C$, so the code does not depend on which basis we choose. Now suppose that we have two sets of  subspaces $\mathcal X$ and $\mathcal X'$ from which we construct additive codes $C$ and $C'$ respectively.
We say the codes $C$ and $C'$ are $\mathrm{P}\Gamma\mathrm{L}$-equivalent if there exists an element $\sigma \in \mathrm{P}\Gamma\mathrm{L}(kh,q)$ such that $\sigma(\mathcal X)=\mathcal X'$. 
It is a seemingly overlooked question as to whether the converse statement in Theorem~\ref{addcodeequiv} is true. Explicitly, if two additive codes $C$ and $C'$ are equivalent then does there exist an element $\sigma \in \mathrm{P}\Gamma\mathrm{L}(kh,q)$ such that $\sigma(\mathcal X)=\mathcal X'$? Thus, we are asking that if two additive codes $C$ and $C'$ are equivalent, then are they necessarily $\mathrm{P}\Gamma\mathrm{L}$-equivalent? Equivalently, if two additive codes $C$ and $C'$ are $\mathrm{P}\Gamma\mathrm{L}$-inequivalent then is it true that they are inequivalent? One can ask the same question for linear codes.

An isometry map from a code $C$ to a code $C'$ is a Hamming distance preserving bijection. We observe that $\mathrm{P}\Gamma\mathrm{L}$-equivalence is equivalent to semi-linear isometric equivalence defined in \cite{BBFKKW2006}. MacWilliams proved in her thesis \cite{MacWilliams1962} that there exists an element $\sigma \in \mathrm{PG}\mathrm{L}(kh,q)$ such that $\sigma(\mathcal X)=\mathcal X'$ if and only if there is a linear isometry between linear codes $C$ and $C'$. This is in the same spirit as the above question, but it is not the same question. For a proof of this MacWilliams theorem, see \cite{BGG1978} or \cite{WW1996}.

The {\em dual code} of $C$ is defined as
$$
C^{\perp}=\{ u \in {\mathbb F}_{q^h}^{n} \ | \ \mathrm{tr}_{q^h\rightarrow q} (u \cdot v)=0, \ \mathrm{for} \ \mathrm{all} \ v \in C\}.
$$

\begin{theorem} \label{dualadd}
The dual of an additive $(n,q^{kh},d)_{q^h}$ code $C$ which is linear over ${\mathbb F}_q$ is an additive $(n,q^{(n-k)h},d')_{q^h}$ code $C^{\perp}$ which is linear over ${\mathbb F}_q$.
\end{theorem}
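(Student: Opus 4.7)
The plan is to establish the three claims in turn: that $C^\perp$ is closed under addition, that it is an ${\mathbb F}_q$-subspace of ${\mathbb F}_{q^h}^n$, and that it has size $q^{(n-k)h}$. The first two statements follow immediately from the ${\mathbb F}_q$-linearity of the trace, and the only substantive step is a dimension count using non-degeneracy of an associated bilinear form.

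For additivity and ${\mathbb F}_q$-linearity, I would take $u_1,u_2 \in C^\perp$, $\lambda \in {\mathbb F}_q$ and an arbitrary $v\in C$, and compute
$$
\mathrm{tr}_{q^h \rightarrow q}((u_1+u_2)\cdot v) = \mathrm{tr}_{q^h \rightarrow q}(u_1\cdot v)+\mathrm{tr}_{q^h \rightarrow q}(u_2\cdot v)=0
$$
and
$$
\mathrm{tr}_{q^h \rightarrow q}((\lambda u_1)\cdot v) = \lambda\, \mathrm{tr}_{q^h \rightarrow q}(u_1\cdot v) = 0,
$$
the second equality using that $\lambda \in {\mathbb F}_q$ can be pulled out of the trace to ${\mathbb F}_q$. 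Hence $u_1+u_2$ and $\lambda u_1$ lie in $C^\perp$.

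For the size, the plan is to regard everything as ${\mathbb F}_q$-vector spaces: $V := {\mathbb F}_{q^h}^n$ has ${\mathbb F}_q$-dimension $hn$, and $C \subseteq V$ has ${\mathbb F}_q$-dimension $hk$ since $|C|=q^{hk}$. I would then consider the ${\mathbb F}_q$-bilinear form $B:V\times V \to {\mathbb F}_q$ defined by $B(u,v)=\mathrm{tr}_{q^h\rightarrow q}(u\cdot v)$, and observe that $C^\perp$ is the $B$-orthogonal complement of $C$. Non-degeneracy of $B$ is the key point: if $B(u,v)=0$ for every $v\in V$, then for each coordinate $i$ and each $\lambda \in {\mathbb F}_{q^h}$, taking $v=\lambda e_i$ yields $\mathrm{tr}_{q^h \rightarrow q}(\lambda u_i)=0$, and because the trace ${\mathbb F}_{q^h}\to {\mathbb F}_q$ is a non-zero (hence surjective) ${\mathbb F}_q$-linear map, this forces $u_i=0$. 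Standard linear algebra over ${\mathbb F}_q$ now gives $\dim_{{\mathbb F}_q} C^\perp = hn-hk = h(n-k)$, so $|C^\perp|=q^{h(n-k)}$. The minimum distance $d'$ is simply defined as the minimum distance of $C^\perp$, so there is nothing further to prove.

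I do not anticipate any real obstacle; the only subtlety is stating the non-degeneracy argument cleanly, which hinges on the surjectivity of the trace map ${\mathbb F}_{q^h}\to {\mathbb F}_q$. Everything else is a direct bookkeeping exercise in ${\mathbb F}_q$-linear algebra.
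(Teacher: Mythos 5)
Your proof is correct and follows essentially the same route as the paper: the paper's one-line argument simply asserts that $C^\perp$, viewed as the orthogonal complement of $C$ over ${\mathbb F}_q$, has size $q^{(n-k)h}$. You have merely filled in the details the paper leaves implicit, namely the ${\mathbb F}_q$-linearity of the trace form and its non-degeneracy, which is exactly what justifies the dimension count.
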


\begin{proof}
Since $C^{\perp}$ is an orthogonal subspace to $C$, considering it as a subspace over ${\mathbb F}_q$, we have that $|C^{\perp}|=q^{(n-k)h}$. Thus, $C^{\perp}$ is an additive $(n,q^{(n-k)h},d')_{q^h}$ code, which is linear over ${\mathbb F}_q$.
\end{proof}

\section{Additive MDS codes over a finite field}

There are many well known results regarding linear codes over a finite field which carry through to additive MDS codes over a finite field. In this section, we prove some of these results.

We define $\mathcal X$ as an {\em arc of $(h-1)$-spaces} if it is a set of $(h-1)$-dimensional spaces of $\mathrm{PG}(kh-1,q)$ with the property that any $k$ elements of $\mathcal X$ span the entire space.

An important remark here is that an arc $\mathcal X$ of $(h-1)$-spaces of $\mathrm{PG}(2h-1,q)$ is a partial spread. In other words, the condition that any two elements of $\mathcal X$ span the whole space, is that they are skew.

\begin{theorem} \label{arcMDS}
Let $C$ be an additive $(n,q^{kh},d)_{q^h}$ code which is linear over ${\mathbb F}_q$ and let $\mathcal X$ be the set of subspaces obtained from a generator matrix $\mathrm{G}$ for $C$. Then $\mathcal X$ is an arc of $(h-1)$-spaces if and only if $C$ is an MDS code.
\end{theorem}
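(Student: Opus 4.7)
The plan is to deduce both directions directly from Theorem \ref{genarcMDS}, which states that the maximum number of elements of $\mathcal{X}$ contained in a hyperplane equals $n-d$. The bridge is to translate the ``span'' condition in the definition of an arc into the complementary ``hyperplane avoidance'' condition provided by Theorem \ref{genarcMDS}, using that a subspace of $\mathrm{PG}(kh-1,q)$ fails to be the whole space if and only if it is contained in some hyperplane.

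For the ``only if'' direction I would assume $C$ is MDS, i.e.\ $d=n-k+1$, so that by Theorem \ref{genarcMDS} every hyperplane contains at most $k-1$ elements of $\mathcal{X}$. I would then verify three things: (i) every element of $\mathcal{X}$ is exactly $(h-1)$-dimensional; (ii) $\mathcal{X}$ has no repeated members; and (iii) any $k$ elements of $\mathcal{X}$ span $\mathrm{PG}(kh-1,q)$. Each is proved by the same dimension count: $k$ subspaces with total vector dimension at most $kh-1$ are contained in a common hyperplane, contradicting the bound $k-1$. Thus an element of vector dimension $<h$, a repeated element used twice together with $k-2$ others, or any $k$ elements failing to span, all produce a hyperplane carrying $k$ members of $\mathcal{X}$.

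For the ``if'' direction I would assume $\mathcal{X}$ is an arc of $(h-1)$-spaces. Since any $k$ elements span the whole space, no hyperplane can contain $k$ of them, so Theorem \ref{genarcMDS} gives $n-d\leqslant k-1$, i.e.\ $d\geqslant n-k+1$. For the reverse inequality I would exhibit a hyperplane meeting $\mathcal{X}$ in exactly $k-1$ elements: the arc property forces any $k-1$ elements of $\mathcal{X}$ to be in ``general position,'' so their span has vector dimension exactly $(k-1)h$ (otherwise adding one more $(h-1)$-space could not reach the ambient dimension $kh$). This span has projective dimension $(k-1)h-1\leqslant kh-2$, hence lies in some hyperplane, which therefore contains at least $k-1$ elements. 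Combining with Theorem \ref{genarcMDS} yields $n-d=k-1$, so $d=n-k+1$ and $C$ is MDS.

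There is no real obstacle here: the only subtlety is making sure that when deriving conditions (i) and (ii) in the MDS $\Rightarrow$ arc direction one has enough other elements available (which is ensured since an MDS code needs $n\geqslant k$), and that in the arc $\Rightarrow$ MDS direction one correctly deduces the exact span dimension of any $k-1$ elements from the hypothesis on $k$-subsets. Both points are short dimension-counting arguments, and the entire proof reduces to combining Theorem \ref{genarcMDS} with the identification ``proper subspace $\Leftrightarrow$ lies in a hyperplane.''
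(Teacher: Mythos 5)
Your proof is correct and takes essentially the same route as the paper's: both directions are reduced to Theorem~\ref{genarcMDS} via the equivalence between ``any $k$ elements of $\mathcal X$ span the whole space'' and ``no hyperplane contains $k$ elements of $\mathcal X$.'' The paper's version is a two-line sketch that leaves implicit exactly the points you check explicitly (that each element is genuinely $(h-1)$-dimensional, that there are no repetitions, and that $n-d=k-1$ is actually attained --- where the paper implicitly relies on the Singleton bound, you instead exhibit a hyperplane through $k-1$ elements of the arc).
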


\begin{proof}
Let $\mathcal X$ be a set of subspaces obtained from a generator matrix $\mathrm{G}$ for $C$. Any $k$ subspaces of $\mathcal X$ span $\mathrm{PG}(kh-1,q)$ iff every hyperplane contains at most $k-1$ subspaces of $\mathcal X$. By Theorem~\ref{genarcMDS}, a hyperplane contains at most $k-1$ subspaces of $\mathcal X$ if and only if $C$ is an MDS code.
\end{proof}

In the construction of the set $\mathcal X$, we wrote each element of ${\mathbb F}_{q^h}$ in $\mathrm{G}$ with respect to a basis for ${\mathbb F}_{q^h}$ over ${\mathbb F}_q$. In this way we consider a generator matrix $\mathrm{G}$ for $C$ as $kh \times nh$ matrix with entries from ${\mathbb F}_{q}$. If we arbitrarily split the rows into disjoint sets of $h$ rows, whilst maintaining the natural partition of the columns we can also consider $\mathrm{G}$ as a $k\times n$ matrix whose entries are $h\times h$ matrices with entries from ${\mathbb F}_q$. The MDS property is given by the following theorem.

\begin{theorem} \label{hbyh}
A $k\times n$ matrix whose entries are $h\times h$ matrices with entries from ${\mathbb F}_q$ is a generator matrix for an additive $(n,q^{kh},n-k+1)_{q^h}$ MDS code if and only if every $k \times k$ submatrix of $\mathrm{G}$, considered as a $kh \times kh$ matrix, is non-singular.
\end{theorem}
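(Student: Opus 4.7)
The plan is to reduce the statement to Theorem~\ref{arcMDS} via the geometric correspondence between $\mathrm{G}$ and $\mathcal X$. By Theorem~\ref{arcMDS}, $C$ is an additive $(n,q^{kh},n-k+1)_{q^h}$ MDS code if and only if $\mathcal X$ is an arc of $(h-1)$-spaces, that is, a collection of $n$ subspaces of $\mathrm{PG}(kh-1,q)$ each of projective dimension $h-1$, any $k$ of which span the whole space. It therefore suffices to translate the arc condition into the non-singularity of $k\times k$ block submatrices.

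First, I would unpack the block form. Writing every entry of $\mathrm{G}$ in the fixed ${\mathbb F}_q$-basis $\{\epsilon_1,\ldots,\epsilon_h\}$ of ${\mathbb F}_{q^h}$ turns $\mathrm{G}$ into a $kh \times nh$ matrix over ${\mathbb F}_q$, whose columns are naturally partitioned into $n$ blocks of $h$ columns. The $j$-th block column is a $kh\times h$ matrix over ${\mathbb F}_q$, and by the construction preceding Theorem~\ref{genarcMDS} its column space in ${\mathbb F}_q^{kh}$ is precisely (a basis for) the subspace $x_j\in\mathcal X\subseteq \mathrm{PG}(kh-1,q)$.

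Second, fix any $k$-subset $J=\{j_1,\ldots,j_k\}$ of the column indices and let $\mathrm{G}_J$ denote the $kh\times kh$ matrix obtained by concatenating the $k$ corresponding block columns; this is exactly the $k\times k$ block submatrix of $\mathrm{G}$ referred to in the statement. The column space of $\mathrm{G}_J$ in ${\mathbb F}_q^{kh}$ equals the linear span of $x_{j_1},\ldots,x_{j_k}$. Hence $\mathrm{G}_J$ is non-singular if and only if $x_{j_1},\ldots,x_{j_k}$ span $\mathrm{PG}(kh-1,q)$. Running this equivalence over all choices of $J$ yields the theorem, modulo one remaining subtlety.

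The one step that requires a small additional argument is ensuring that the non-singularity of every $\mathrm{G}_J$ forces each individual $x_j$ to have projective dimension exactly $h-1$, as required by the definition of an arc of $(h-1)$-spaces. This follows from a counting of ranks: if some block column had rank less than $h$ over ${\mathbb F}_q$, then concatenating it with any $k-1$ other block columns (each contributing rank at most $h$) would give a $kh\times kh$ matrix of rank strictly less than $kh$, contradicting the hypothesis. Conversely, the arc property immediately implies that any $k$ block columns span ${\mathbb F}_q^{kh}$, so every $\mathrm{G}_J$ is non-singular. I do not anticipate any genuine obstacle here; the entire proof is a bookkeeping translation between the vector span of columns and the projective span of the subspaces in $\mathcal X$.
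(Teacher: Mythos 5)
Your proposal is correct and is essentially the paper's own argument: the paper proves Theorem~\ref{hbyh} simply by citing Theorem~\ref{arcMDS}, and your write-up is the detailed bookkeeping behind that one-line reduction (including the rank argument showing each block must contribute a full $(h-1)$-dimensional subspace). Nothing is missing and no different route is taken.
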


\begin{proof}
This follows from Theorem~\ref{arcMDS}. 
\end{proof}

Theorem~\ref{hbyh} can be compared to similar statements for MDS codes over rings, for example, see \cite[Theorem 3]{Shiromoto2000}.

\begin{theorem} \label{dualMDS}
The dual of an additive $(n,q^{kh},n-k+1)_{q^h}$ MDS code $C$ which is linear over ${\mathbb F}_q$ is an additive $(n,q^{(n-k)h},k+1)_{q^h}$ MDS code $C^\perp$ which is linear over ${\mathbb F}_q$.
\end{theorem}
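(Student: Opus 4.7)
The plan is to invoke Theorem~\ref{dualadd} to get immediately that $C^\perp$ is an additive $(n,q^{(n-k)h},d')_{q^h}$ code which is linear over ${\mathbb F}_q$, reducing everything to showing $d'=k+1$. I will establish this by a matching pair of inequalities.

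For the upper bound $d' \leq k+1$, I replay the standard Singleton argument in the additive setting: the ${\mathbb F}_q$-linear projection of $C^\perp$ onto any $n-d'+1$ coordinates has trivial kernel, since any element of the kernel is supported on the remaining $d'-1$ coordinates and hence, by Lemma~\ref{minweight}, must be zero. Injectivity of this projection forces $(n-k)h \leq (n-d'+1)h$, i.e.\ $d' \leq k+1$.

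For the lower bound $d' \geq k+1$, I fix an arbitrary set $S \subseteq \{1,\ldots,n\}$ of size $k$ and show that no non-zero element of $C^\perp$ is supported on $S$. Since $C$ has minimum distance $n-k+1$, Lemma~\ref{minweight} says every non-zero codeword of $C$ has at most $k-1$ zero entries, so the ${\mathbb F}_q$-linear projection $C \to {\mathbb F}_{q^h}^S$ is injective; comparing ${\mathbb F}_q$-dimensions (both equal to $kh$) makes it a bijection. Letting $V_T$ denote the ${\mathbb F}_q$-subspace of ${\mathbb F}_{q^h}^n$ consisting of vectors supported on $T$, this bijectivity yields $C \cap V_{S^c}=\{0\}$ and, by dimension count, $C+V_{S^c}={\mathbb F}_{q^h}^n$. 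The form $\langle u,v\rangle = \mathrm{tr}_{q^h\to q}(u\cdot v)$ is a non-degenerate ${\mathbb F}_q$-bilinear form on the $nh$-dimensional ${\mathbb F}_q$-space ${\mathbb F}_{q^h}^n$, and non-degeneracy of the trace on ${\mathbb F}_{q^h}$ gives the clean identification $V_S^\perp=V_{S^c}$. Applying ${\mathbb F}_q$-duality therefore gives $V_S \cap C^\perp = (V_{S^c}+C)^\perp = \{0\}$, which says no non-zero element of $C^\perp$ is supported on $S$, as required.

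The main obstacle I anticipate is purely bookkeeping: one must work with the trace form as an ${\mathbb F}_q$-bilinear form on ${\mathbb F}_{q^h}^n$ (viewed as an ${\mathbb F}_q$-space of dimension $nh$) so that $V_S^\perp = V_{S^c}$ and the duality $(A+B)^\perp = A^\perp \cap B^\perp$ apply to additive (not merely ${\mathbb F}_{q^h}$-linear) subspaces. Once this is set up, both steps are routine; alternatively, the coordinate projection step could be replaced by the geometric reformulation of Theorem~\ref{arcMDS}, since bijectivity of the projection of $C$ onto any $k$ coordinates is exactly the arc property of the multi-set $\mathcal{X}$ associated to $C$.
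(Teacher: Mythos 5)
Your proof is correct, and it reaches the conclusion by a genuinely different route from the paper. Both arguments begin identically, invoking Theorem~\ref{dualadd} to reduce everything to computing $d'$. The paper then argues only the lower bound, and does so through the geometric dictionary: a codeword of $C^\perp$ of weight at most $k$ would force the columns of a generator matrix of $C$ on its support to be ${\mathbb F}_q$-linearly dependent, so the corresponding $k$ subspaces of $\mathcal X$ would fail to span, contradicting the arc property of Theorem~\ref{arcMDS}; the matching upper bound $d'\leqslant k+1$ is left implicit (it is the Singleton bound for a code of size $q^{(n-k)h}$). You instead stay entirely inside linear algebra over ${\mathbb F}_q$: the identity $V_S\cap C^\perp=(C+V_{S^c})^\perp=\{0\}$ for every $k$-set $S$, derived from non-degeneracy of the trace form and the identification $V_S^\perp=V_{S^c}$, is the duality-theoretic restatement of the paper's column-dependence argument, and you supply the Singleton upper bound explicitly. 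What your version buys is self-containedness (no appeal to the arc correspondence) and the explicit surfacing of the non-degeneracy of $\mathrm{tr}_{q^h\to q}(u\cdot v)$, which is in fact also the unstated ingredient behind the dimension count $|C^\perp|=q^{(n-k)h}$ in Theorem~\ref{dualadd}; what the paper's version buys is brevity and consistency with the geometric viewpoint used throughout. Your closing remark is accurate: bijectivity of the projection of $C$ onto any $k$ coordinates is precisely the arc property, so the two proofs are dual formulations of the same underlying fact.
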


\begin{proof}
By Theorem~\ref{dualadd} and Lemma~\ref{minweight}, we only have to prove that the minimum weight of $C^{\perp}$, as a code over ${\mathbb F}_{q^h}$, is $k+1$. Suppose that $C^{\perp}$ has an element $v$ of weight at most $k$. Consider the set of $k$ vectors of ${\mathbb F}_{q^h}^{kh}$ obtained from the columns of a generator matrix for $C$ corresponding to the non-zero coordinates of $v$. Writing these vectors out over ${\mathbb F}_q$ we obtain a set of $kh$ linearly dependent vectors of ${\mathbb F}_{q}^{kh}$. This implies that the corresponding $k$ subspaces of $\mathcal X$ do not span the entire space, contradicting Theorem~\ref{arcMDS}. 
\end{proof}

The following lemma is a generalisation of what is sometimes called the {\em projection lemma} for arcs.

\begin{lemma}\label{projlemma} 
Let $\mathcal X$ be an arc of $(h-1)$-spaces in $\mathrm{PG}(kh-1,q)$ and let
$\mathcal A$ be a subset of $\mathcal X$ of size $r \leqslant k-2$. The projection of the elements of $\mathcal X \setminus \mathcal A$ from the subspace spanned by the elements of $\mathcal A$ is an arc of $(h-1)$-spaces in $\mathrm{PG}((k-r)h-1,q)$.
\end{lemma}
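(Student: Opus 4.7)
The plan is to work in the quotient of $\mathrm{PG}(kh-1,q)$ by the subspace $A=\langle \mathcal A\rangle$ and to apply the arc property twice: first to determine the dimension of $A$ itself and of each individual projection, and then to show that any $k-r$ of the projections span the quotient.

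I would begin by observing that each element of $\mathcal X$ has projective dimension exactly $h-1$ (i.e.\ linear dimension $h$): if one had smaller dimension, then taking it together with any $k-1$ further elements of $\mathcal X$ would give a span of linear dimension strictly less than $kh$, contradicting the arc property. Combined with the fact that any $k$ elements of $\mathcal X$ span the full $kh$-dimensional ambient space, a straightforward dimension count forces the linear span of any $j\leqslant k$ elements of $\mathcal X$ to have dimension exactly $jh$. In particular, $A$ has projective dimension $rh-1$, so the quotient space $\mathrm{PG}(kh-1,q)/A$ is a $\mathrm{PG}((k-r)h-1,q)$, which is the target geometry claimed in the lemma.

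Next, for each $x\in \mathcal X\setminus\mathcal A$, I would pick $k-r-1$ further elements of $\mathcal X\setminus(\mathcal A\cup\{x\})$ (these exist whenever $|\mathcal X|\geqslant k$; otherwise the claim is vacuous) and apply the arc property to the resulting $k$-element subset of $\mathcal X$. The observation from the previous paragraph, applied with $j=r+1$, forces the linear sum $x+A$ to have dimension $(r+1)h$, so $x\cap A=\{0\}$. Consequently the projection $\bar x=(x+A)/A$ is a well-defined subspace of the quotient of projective dimension exactly $h-1$.

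Finally, to verify the arc property for the projected set, I would take any $k-r$ elements $y_1,\ldots,y_{k-r}$ of $\mathcal X\setminus\mathcal A$. Since none of the $y_i$ lies in $\mathcal A$, the set $\mathcal A\cup\{y_1,\ldots,y_{k-r}\}$ consists of $k$ distinct elements of $\mathcal X$, so by the arc hypothesis its linear span is the whole ambient space. Passing to the quotient shows that $\bar y_1,\ldots,\bar y_{k-r}$ span $\mathrm{PG}((k-r)h-1,q)$; in particular, the projections are forced to be distinct by a dimension count (their linear dimensions sum to at most $(k-r)h$ with equality only when they are distinct $(h-1)$-spaces intersecting trivially). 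The main thing to be careful about is the dimension bookkeeping confirming that $A$ has the expected linear dimension $rh$; once that is in hand the proof reduces to applying the arc property to two carefully chosen $k$-element subsets of $\mathcal X$, and no step poses any real obstacle.
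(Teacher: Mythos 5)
Your proof is correct and its core step---applying the arc property in the ambient space to the chosen $k-r$ elements together with $\mathcal A$ and then passing to the quotient---is exactly the paper's (one-line) argument. The additional dimension bookkeeping you supply (that $\langle\mathcal A\rangle$ has linear dimension $rh$ and that each projection is a genuine $(h-1)$-space meeting $\langle\mathcal A\rangle$ trivially) is left implicit in the paper but is a worthwhile, correct elaboration rather than a different approach.
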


\begin{proof}
If a subset of $k-r$ subspaces of $\mathcal X \setminus \mathcal A$ do not span $\mathrm{PG}((k-r)h-1,q)$ then, together with the elements of $\mathcal A$, they do not span $\mathrm{PG}(kh-1,q)$, contradicting the arc property.
\end{proof}

\section{Projections onto Desarguesian spreads}

The vector space ${\mathbb F}_{q^h}$ is isomorphic to ${\mathbb F}_{q}^h$ when viewed as a vector space over ${\mathbb F}_q$. Under this isomorphism, we get a map $\Psi$ from the $(r-1)$-dimensional subspaces of $\mathrm{PG}(k-1,q^h)$ to the $(rh-1)$-dimensional subspaces of $\mathrm{PG}(kh-1,q)$. This is called the field reduction map in \cite{LVdV2015}, to which we refer to for more details. The image of the points of $\mathrm{PG}(k-1,q^h)$ under $\Psi$ is a {\em Desarguesian spread} of $(h-1)$-dimensional subspaces of $\mathrm{PG}(kh-1,q)$. If $\mathcal X$ is an arc of points of $\mathrm{PG}(k-1,q^h)$ then $\Psi(\mathcal X)$ is an arc of $(h-1)$-spaces of $\mathrm{PG}(kh-1,q)$.

\begin{lemma}
If $\mathcal X$ is arc of $(h-1)$-dimensional subspaces of $\mathrm{PG}(kh-1,q)$ which is contained in a Desarguesian spread then $\Psi^{-1}(\mathcal X)$ is an arc of $\mathrm{PG}(k-1,q^h)$.
\end{lemma}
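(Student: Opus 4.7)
The plan is to show that the field reduction map $\Psi$ is compatible with taking spans, so the arc condition on $\mathcal X$ translates directly to the arc condition on $\Psi^{-1}(\mathcal X)$.

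First I would note that because $\mathcal X$ is contained in the Desarguesian spread $\Psi(\mathrm{PG}(k-1,q^h))$, the preimage $\Psi^{-1}(\mathcal X)$ is a well-defined set of points of $\mathrm{PG}(k-1,q^h)$, with $|\Psi^{-1}(\mathcal X)|=|\mathcal X|$.

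The crucial step is the following "span-preservation" property of $\Psi$: for any points $P_1,\ldots,P_m$ of $\mathrm{PG}(k-1,q^h)$,
\[
\langle \Psi(P_1),\ldots,\Psi(P_m)\rangle_{q} \;=\; \Psi\bigl(\langle P_1,\ldots,P_m\rangle_{q^h}\bigr).
\]
To see this, write $P_i=\langle p_i\rangle_{{\mathbb F}_{q^h}}$ with $p_i\in{\mathbb F}_{q^h}^k$. By the definition of field reduction, $\Psi(P_i)$ corresponds to the ${\mathbb F}_q$-subspace ${\mathbb F}_{q^h}\cdot p_i$ of ${\mathbb F}_{q^h}^k$ viewed as ${\mathbb F}_q^{kh}$. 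Hence the ${\mathbb F}_q$-span of $\Psi(P_1),\ldots,\Psi(P_m)$ is the ${\mathbb F}_q$-subspace $\sum_{i=1}^{m} {\mathbb F}_{q^h}\cdot p_i$, which coincides as a subset of ${\mathbb F}_{q^h}^k$ with the ${\mathbb F}_{q^h}$-span of $p_1,\ldots,p_m$. Applying $\Psi$ to this ${\mathbb F}_{q^h}$-subspace recovers precisely that ${\mathbb F}_q$-subspace, proving the identity. In particular, if $\langle P_1,\ldots,P_m\rangle$ has projective dimension $r-1$ over ${\mathbb F}_{q^h}$, then $\langle \Psi(P_1),\ldots,\Psi(P_m)\rangle$ has projective dimension $rh-1$ over ${\mathbb F}_q$.

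Given this, the conclusion is immediate. Take any $k$ points $P_1,\ldots,P_k\in \Psi^{-1}(\mathcal X)$. The elements $\Psi(P_1),\ldots,\Psi(P_k)$ lie in $\mathcal X$, so by the arc property they span $\mathrm{PG}(kh-1,q)$, i.e.\ their span has projective dimension $kh-1$. By the span-preservation identity, this forces $\langle P_1,\ldots,P_k\rangle$ to have projective dimension $k-1$, so $P_1,\ldots,P_k$ span $\mathrm{PG}(k-1,q^h)$. Hence $\Psi^{-1}(\mathcal X)$ is an arc.

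The only real work is verifying the span-preservation identity, which amounts to carefully tracking dimensions under the ${\mathbb F}_q$/${\mathbb F}_{q^h}$ identification; once that is in hand, the arc property transfers with no further effort. There is no substantive obstacle, and nothing beyond the basic linear-algebraic content of field reduction as described in \cite{LVdV2015} is needed.
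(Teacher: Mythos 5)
Your proof is correct. Note that the paper states this lemma without any proof at all, treating it as an immediate consequence of the properties of the field reduction map $\Psi$ from \cite{LVdV2015}; your span-preservation identity $\langle \Psi(P_1),\ldots,\Psi(P_m)\rangle_{q} = \Psi\bigl(\langle P_1,\ldots,P_m\rangle_{q^h}\bigr)$ is exactly the linear-algebraic fact being implicitly invoked, and your verification of it (the ${\mathbb F}_q$-span of the subspaces ${\mathbb F}_{q^h}\cdot p_i$ is $\sum_i {\mathbb F}_{q^h}\cdot p_i$, which coincides with the ${\mathbb F}_{q^h}$-span of the $p_i$) is sound. The only point worth making explicit is that distinct points of $\Psi^{-1}(\mathcal X)$ map to distinct elements of $\mathcal X$ since $\Psi$ is a bijection onto the spread, so choosing $k$ points upstream really does give $k$ elements of the arc downstream; with that observed, the dimension count $kh \Rightarrow k$ closes the argument exactly as you say.
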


To be able to identify an arc of lines of $\mathrm{PG}(5,q)$ which is contained in a Desarguesian spread we will use Theorem~\ref{Desproj}. We introduce some terminology which we will need in the proof of Theorem~\ref{Desproj}. 

A {\em partial spread set} $S$ is a set of $h \times h$ matrices with entries from ${\mathbb F}_q$, with the property that for all $\mathrm{A},\mathrm{B} \in S$, $\mathrm{A}\neq \mathrm{B}$, the matrix $\mathrm{A}-\mathrm{B}$ is non-singular, i.e. $\det(\mathrm{A}-\mathrm{B}) \neq 0$.

For each $\mathrm{A} \in S$, we define an $(h-1)$-dimensional subspace $\pi_{\mathrm{A}}$, to be the subspace spanned by the columns of the matrix
$$
\left(\begin{array}{c} \mathrm{I}_h \\ \hline \mathrm{A} \end{array} \right),
$$
where $\mathrm{I}_h$ is the $h \times h$ identity matrix.

Let $\pi_{\infty}$ be the $(h-1)$-dimensional subspace spanned by the columns of the matrix
$$
\left(\begin{array}{c} \mathrm{O}_h \\ \hline \mathrm{I}_h \end{array} \right),
$$
where $\mathrm{O}_h$ is the $h \times h$ zero matrix.

The name ``partial spread set'' derives from the fact that
$$
\{ \pi_{\mathrm{A}} \ | \ \mathrm{A} \in S \} \cup \{ \pi_{\infty} \}
$$
is a partial spread, i.e. an arc of $(h-1)$-dimensional subspaces of $\mathrm{PG}(2h-1,q)$.

We will need a converse of this statement, which we prove in the following lemma.

\begin{lemma} \label{spreadsetlemma}
A partial spread (i.e. an arc) $\mathcal X$ of $(h-1)$-dimensional subspaces of $\mathrm{PG}(2h-1,q)$ which contains $\pi_{\infty}$ is given by a partial spread set. 

If $h=2$ then we can assume that the partial spread set is
$$
S=\left\{ \left(\begin{array}{cc} x_1 & f_1(x_1,x_2) \\ x_2 & f_2(x_1,x_2) \end{array} \right) \ | \ (x_1,x_2) \in T \right\},
$$
for some subset $T$ of ${\mathbb F}_q^2$. Moreover, if the partial spread is contained in a Desarguesian spread then we can take $f_1$ and $f_2$ to be linear and if $\mathrm{O}_2 , \mathrm{I}_2 \in S$ then
$$
f_1(x_1,x_2)=a_1x_2, \ \mathrm{and} \ f_2(x_1,x_2)=x_1+a_2x_2,
$$
for some $a_1,a_2 \in {\mathbb F}_q$ such that $X^2+a_2X-a_1$ is an irreducible polynomial in ${\mathbb F}_q[X]$.
\end{lemma}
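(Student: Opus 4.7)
The plan is to handle the three claims in turn, each a translation from elementary linear algebra to the partial spread set formalism. For the first, take any $\pi\in\mathcal{X}\setminus\{\pi_\infty\}$. Since $\pi$ is skew to $\pi_\infty$, the corresponding $h$-dimensional subspace of $\mathbb{F}_q^{2h}$ meets $\{(0,y):y\in\mathbb{F}_q^h\}$ trivially, so the projection onto the first $h$ coordinates is bijective and $\pi=\{(x,Ax):x\in\mathbb{F}_q^h\}=\pi_A$ for a unique $h\times h$ matrix $A$ over $\mathbb{F}_q$. The condition that $\pi_A$ and $\pi_B$ are skew translates to $\ker(A-B)=0$, i.e.\ $\det(A-B)\neq 0$, exhibiting the resulting set of matrices as a partial spread set. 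For the case $h=2$, if two matrices in $S$ shared a first column then their difference would be singular, contradicting the partial spread set condition; hence the first column $(x_1,x_2)$ uniquely determines the matrix, so the second column takes the form $(f_1(x_1,x_2),f_2(x_1,x_2))$ for well-defined functions on $T=\{(x_1,x_2):\text{some element of }S\text{ has first column }(x_1,x_2)\}$.

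For the Desarguesian assertion I would invoke the standard fact (see \cite{LVdV2015}) that a Desarguesian spread $\Sigma$ of $\mathrm{PG}(3,q)$ is precisely the set of $\sigma$-stable lines for some $\sigma\in\mathrm{GL}(4,q)$ whose minimal polynomial over $\mathbb{F}_q$ is irreducible of degree $2$ (namely, multiplication by a generator $\beta$ of $\mathbb{F}_{q^2}$). Because $\Sigma$ contains the coordinate $(h-1)$-spaces $\pi_\infty$ and $\pi_{\mathrm{O}_2}$, the map $\sigma$ must preserve each of them, which forces a block-diagonal shape $\sigma=T_1\oplus T_2$ with $T_i\in\mathrm{GL}(2,q)$. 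Preserving the graph $\pi_{\mathrm{I}_2}=\{(x,x):x\in\mathbb{F}_q^2\}$ then requires $T_1x=T_2x$ for all $x$, so $T_1=T_2=:T$. A general $\pi_A\in\Sigma$ being $\sigma$-stable gives $AT=TA$, so every $A\in S$ lies in the two-dimensional $\mathbb{F}_q$-subalgebra $\mathbb{F}_q[T]\subseteq M_2(\mathbb{F}_q)$.

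To finish I would apply the further change of basis $P\oplus P$ for a suitable $P\in\mathrm{GL}(2,q)$, which preserves each of $\pi_\infty,\pi_{\mathrm{O}_2},\pi_{\mathrm{I}_2}$ and replaces every $A\in S$ by $PAP^{-1}$, and in particular $T$ by $PTP^{-1}$. Choosing $P$ to put $T$ into the rational canonical form $T=\left(\begin{array}{cc}0&a_1\\1&a_2\end{array}\right)$ (possibly after a sign renormalisation of $a_2$ to reconcile with the convention chosen in the statement), every $A\in\mathbb{F}_q[T]$ becomes $x_1I+x_2T=\left(\begin{array}{cc}x_1&a_1x_2\\x_2&x_1+a_2x_2\end{array}\right)$, which produces the asserted linear forms $f_1(x_1,x_2)=a_1x_2$ and $f_2(x_1,x_2)=x_1+a_2x_2$; irreducibility of the minimal polynomial of $T$ is exactly the irreducibility of the quadratic stated in the lemma, and is equivalent to nonsingularity of every nonzero $A\in\mathbb{F}_q[T]$. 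The main obstacle is the invocation of $\sigma$: one must cite or reprove the characterisation of a Desarguesian spread as the set of $\sigma$-stable lines for such a map, which is the substance of the field-reduction correspondence. Granted this, every remaining step reduces to routine linear algebra and rational-canonical-form theory.
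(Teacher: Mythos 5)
Your proof is correct, and for the first two claims it coincides with the paper's argument: the paper locates the unique point of $\pi_{\mathrm{A}}$ in each $h$-space $X_2=\cdots=X_h=0$ (etc.) through $\pi_\infty$ rather than saying that projection onto the first $h$ coordinates is bijective, but the content is identical, as is the observation that two members of a partial spread set cannot share a column. For the Desarguesian claim you take a genuinely different route. The paper simply cites \cite[pp.~220]{Dembowski1997} for the linearity of $f_1,f_2$, then pins down the coefficients by evaluating at $(0,0)$ and $(1,0)$ using $\mathrm{O}_2,\mathrm{I}_2\in S$, and extracts irreducibility of $X^2+a_2X-a_1$ from the nonsingularity of $\mathrm{A}-\mathrm{O}_2$. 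You instead realise the Desarguesian spread as the set of lines stable under a $\sigma\in\mathrm{GL}(4,q)$ with irreducible quadratic minimal polynomial, force $\sigma=T\oplus T$ from stability of $\pi_{\mathrm{O}_2},\pi_\infty,\pi_{\mathrm{I}_2}$, and identify the spread set with $\mathbb{F}_q[T]\cong\mathbb{F}_{q^2}$. This is more structural and essentially self-contained: the $\sigma$-stability characterisation is the field-reduction correspondence of \cite{LVdV2015} and is of comparable depth to the Dembowski citation it replaces, while irreducibility falls out of the field structure of $\mathbb{F}_q[T]$ rather than from a determinant computation. Two small points. First, the step ``$AT=TA$ implies $A\in\mathbb{F}_q[T]$'' needs that a $2\times2$ matrix with irreducible quadratic minimal polynomial is non-derogatory (or, more cheaply, a count: the spread has $q^2$ members other than $\pi_\infty$ and $|\mathbb{F}_q[T]|=q^2$, so the containment $\{A:\pi_A\in\Sigma\}\supseteq\mathbb{F}_q[T]$ is an equality). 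Second, the conjugation by $P\oplus P$ is avoidable, and avoiding it is preferable because the lemma is invoked inside the proof of Theorem~\ref{Desproj} in coordinates that are already fixed: writing $T=(t_{ij})$ one has $t_{21}\neq0$ (otherwise $T$ has an eigenvalue in $\mathbb{F}_q$), and reparametrising $x_1\mathrm{I}_2+x_2T$ by its first column already produces the stated form with $a_1=t_{12}/t_{21}$ and $a_2=(t_{22}-t_{11})/t_{21}$; this also disposes of the sign ambiguity in $a_2$ that you mention.
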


\begin{proof}
Let $\pi_{\mathrm{A}}$ be an element of the arc $\mathcal X$. Since $\pi_{\infty}$ is contained in $\pi$, the $h$-dimensional subspace $X_2=\cdots=X_{h}=0$, there is a unique point in the intersection of $\pi_{\mathrm{A}}$ and $\pi$, which is necessarily of the form 
$$
(1:0:\cdots:0:x_1:\cdots:x_h),
$$
where we use the separator $:$ to indicate that this is a projective point. The same argument works for any of the first $h$ coordinates, so we conclude that each element of the arc is the span of the columns of a matrix of the form 
$$
\left(\begin{array}{c} \mathrm{I}_h \\ \hline \mathrm{A} \end{array} \right).
$$
Moreover, since $\mathcal X$ is an arc, the set of matrices 
$$
\{ \mathrm{A} \ | \ \pi_{\mathrm{A}} \in \mathcal X \}
$$
is a partial spread set.

Clearly, any two elements of a partial spread set differ in the $i$-th column, for any $i \in \{1,\ldots,h\}$, so in particular we can assume that if $h=2$ the partial spread set has the desired form.

If the partial spread is contained in a Desarguesian spread then we can take $f_1$ and $f_2$ to be linear. This follows from \cite[pp. 220]{Dembowski1997}. Moreover, if $\mathrm{O}_2, \mathrm{I}_2 \in S$ then $f_1(0,0)=f_2(0,0)=0$, $f_1(1,0)=0$ and $f_2(1,0)=1$, from which it follows that 
$$
f_1(x_1,x_2)=a_1x_2, \ \mathrm{and} \ f_2(x_1,x_2)=x_1+a_2x_2,
$$
for some $a_1,a_2 \in {\mathbb F}_q$. 

The fact that 
$$
\left( \begin{array}{cc} x_1 & a_1x_2 \\ x_2 & x_1+a_2x_2 \end{array} \right) \ \mathrm{and} \
\left( \begin{array}{cc} 0 & 0 \\ 0 & 0 \end{array} \right)
$$
are in the spread set of a Desarguesian spread containing $\mathrm{O}_2$ and $\mathrm{I}_2$ implies that
$$
\left| \begin{array}{cc} x_1 & a_1x_2 \\ x_2 & x_1+a_2x_2 \end{array} \right| \neq 0,
$$
which implies that $X^2+a_2X-a_1$ is an irreducible polynomial in ${\mathbb F}_q[X]$.
\end{proof}

We are now in a position to prove the main theorem of the article.

\begin{theorem} \label{Desproj}
Let ${\mathcal X}$ be an arc of at least $q+k$ lines of $\mathrm{PG}(2k-1,q)$. If there is a subset ${\mathcal S}$ of $\mathcal X$ of size $k+1$ with the property that the projection of $\mathcal X$ from any $(k-2)$-subset of ${\mathcal S}$ is contained in a Desarguesian spread of $\mathrm{PG}(3,q)$ then ${\mathcal X}$ is contained in a Desarguesian spread.
\end{theorem}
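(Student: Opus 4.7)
I first choose coordinates so that the arc $\mathcal S$ takes standard form. The arc property on $\mathcal S$ implies that $L_1,\ldots,L_k$ are in direct sum and that $L_{k+1}$ projects isomorphically onto each $L_i$ along $\bigoplus_{j\neq i}L_j$, so after a linear change of basis I may assume $L_i=\langle e_{2i-1},e_{2i}\rangle$ for $i\leqslant k$ and $L_{k+1}=\langle e_1+e_3+\cdots+e_{2k-1},\,e_2+e_4+\cdots+e_{2k}\rangle$. For any $M\in\mathcal X\setminus\{L_2,\ldots,L_k\}$, the arc property forces $M\cap\bigoplus_{j\neq 1}L_j=0$, so I can write $M$ as the column span of $(\mathrm I_2\mid\mathrm M_2\mid\cdots\mid\mathrm M_k)^\top$ with each $\mathrm M_i\in\mathrm{GL}_2(\mathbb F_q)$; setting $\mathrm M_1:=\mathrm I_2$, a block-determinant computation shows that the remaining arc conditions amount to $\det(\mathrm M_i-\mathrm M_j)\neq 0$ for all $i\neq j$, so that $\{0,\mathrm I_2,\mathrm M_2,\ldots,\mathrm M_k\}$ is a partial spread set of $M_2(\mathbb F_q)$.

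I then apply the hypothesis projection by projection. For each pair $a\neq b$ in $\{1,\ldots,k\}$, projection from $\mathcal A_{ab}:=\mathcal S\setminus\{L_a,L_b,L_{k+1}\}$ lands in $\mathrm{PG}(3,q)$ with $L_a,L_b,L_{k+1}$ mapped to $\pi_{\mathrm O_2},\pi_\infty,\pi_{\mathrm I_2}$ respectively; by the hypothesis and Lemma~\ref{spreadsetlemma}, the containing Desarguesian spread is parametrised by a $2$-dimensional subfield $\mathcal F_{ab}\subset M_2(\mathbb F_q)$ containing $\mathrm I_2$, and the image of $M$ lies in this spread if and only if $\mathrm M_b\mathrm M_a^{-1}\in\mathcal F_{ab}$. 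A parallel computation for a $(k-2)$-subset containing $L_{k+1}$ (and three other lines $L_a,L_b,L_c$) gives conditions of the form $(\mathrm M_b-\mathrm M_c)(\mathrm M_a-\mathrm M_c)^{-1}\in\mathcal F'_{abc}$. Moreover the arc property forces $M\mapsto\bar M$ to be injective on $\mathcal X$ for every such projection, so each projected partial spread contains at least $|\mathcal X|-(k-2)\geqslant q+2$ elements of its ambient Desarguesian spread.

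The crux of the proof is to show that all of the subfields $\mathcal F_{ab}$ coincide with a single $2$-dimensional subfield $\mathcal F\subset M_2(\mathbb F_q)$. Any two distinct $2$-dimensional subfields of $M_2(\mathbb F_q)$ intersect only in $\mathbb F_q\cdot\mathrm I_2$, which has $q$ elements, so any field-theoretic identity mixing elements from different $\mathcal F_{ab}$ is very restrictive. For indices $i,j,l$ the cocycle identity $\mathrm M_l\mathrm M_i^{-1}=(\mathrm M_l\mathrm M_j^{-1})(\mathrm M_j\mathrm M_i^{-1})$ combined with the three hypotheses for the pairs $(i,j),(j,l),(i,l)$ says that the product of an element of $\mathcal F_{jl}$ and an element of $\mathcal F_{ij}$ must lie in $\mathcal F_{il}$ for each of the at least $q+2$ lines of $\mathcal X$ visible in the corresponding projection. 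A counting argument ruling out the possibility that $\mathcal F_{ij}$ and $\mathcal F_{jl}$ differ (which would restrict the admissible products to a low-dimensional subvariety) then forces $\mathcal F_{ij}=\mathcal F_{jl}=\mathcal F_{il}$; iterating over all index triples yields a common $\mathcal F$, and the triple conditions involving $L_{k+1}$ follow automatically from the closure of $\mathcal F$ under arithmetic.

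With the common subfield $\mathcal F$ in hand, fix a generator $J_0\in\mathcal F$ and define $J\colon\mathbb F_q^{2k}\to\mathbb F_q^{2k}$ by $J|_{L_i}=J_0$ in the basis $(e_{2i-1},e_{2i})$ for each $i\leqslant k$. Then $J$ satisfies the same irreducible quadratic as $J_0$, so it endows $\mathbb F_q^{2k}$ with an $\mathbb F_{q^2}$-vector-space structure; because $J$ acts as the same matrix $J_0$ on each summand of the direct decomposition, $L_{k+1}$ is also $J$-invariant, as is every $M\in\mathcal X$ (since $\mathrm M_i\in\mathcal F$ commutes with $J_0$ for every $i$, and $\mathcal F$ is the centraliser of $J_0$). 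Each line of $\mathcal X$ is therefore a $1$-dimensional $\mathbb F_{q^2}$-subspace, and $\mathcal X\subset\Psi(\mathrm{PG}(k-1,q^2))$ as required. The main obstacle is the consistency step in the third paragraph: the length bound $|\mathcal X|\geqslant q+k$ is exactly what ensures that the product-coincidences in the subfields $\mathcal F_{ab}$ occur often enough to collapse all of them into one.
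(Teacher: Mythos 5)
Your setup coincides with the paper's: normalise so that $L_1,\ldots,L_k$ are the coordinate lines and $L_{k+1}$ the ``all--ones'' line, write every other line as the column span of $(\mathrm{I}_2\mid \mathrm{M}_2\mid\cdots\mid \mathrm{M}_k)^{\top}$, and translate each projection hypothesis via Lemma~\ref{spreadsetlemma} into membership of $\mathrm{M}_b\mathrm{M}_a^{-1}$ (resp.\ $(\mathrm{M}_b-\mathrm{M}_c)(\mathrm{M}_a-\mathrm{M}_c)^{-1}$) in a quadratic subfield of $M_2({\mathbb F}_q)$; your final paragraph, producing the ${\mathbb F}_{q^2}$-structure $J$ from a common subfield $\mathcal F$, is also correct. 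The problem is that the decisive step --- that all the subfields $\mathcal F_{ab}$ coincide --- is exactly the content of the theorem, and your ``counting argument'' does not establish it as stated. For a fixed nonzero $A\in\mathcal F_{ij}$, the condition that some nonzero $B\in\mathcal F_{jl}$ has $BA\in\mathcal F_{il}$ is the condition $\mathcal F_{il}\cap\mathcal F_{jl}A\neq 0$, i.e.\ the vanishing of a $4\times 4$ determinant whose columns depend linearly on the two coordinates of $A$; this is a homogeneous \emph{quadratic} condition on $\mathcal F_{ij}\cong{\mathbb F}_q^2$ and can hold for up to roughly $2q$ values of $A$ even when the three subfields are pairwise distinct. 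Since the hypothesis only guarantees about $q-1$ lines outside $\mathcal S$ (hence $q+2$ in each projection), ``the admissible products lie in a low-dimensional subvariety'' is not a contradiction: over a finite field a curve of degree $2$ has too many points.

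What the paper does at this juncture is an explicit elimination: writing the three subfields with parameters $(a_1,a_2)$, $(b_1,b_2)$, $(c_1,c_2)$ and the lines with parameters $(x_1,x_2)$, the compatibility equations collapse (after eliminating $y_1,y_2,z_1$ and using $z_2\neq 0$) to a single \emph{linear} relation in $(x_1,x_2)$ with coefficients depending on $a,b,c$; only a linear relation confines the $\geqslant q+1$ admissible parameter pairs to at most $q$ points and forces both coefficients to vanish. Even then one does not directly get $\mathcal F_{12}=\mathcal F_{13}$, but a dichotomy: either $(a_1,a_2)=(b_1,b_2)$, or $(c_1,c_2)$ is uniquely determined (uniqueness resting on the irreducibility of $X^2+a_2X-a_1$). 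Resolving the second branch requires the projections whose centre \emph{contains} $L_{k+1}$, which produce the affine conditions $(\mathrm{I}-\mathrm{M}_3)(\mathrm{I}-\mathrm{M}_2)^{-1}\in\mathcal F'$ and force $\mathcal F'=\mathcal F_{23}$, whence two coherent projections and containment in a Desarguesian spread. So your remark that ``the triple conditions involving $L_{k+1}$ follow automatically from the closure of $\mathcal F$ under arithmetic'' inverts their role: they are an indispensable input, not a consequence. To repair your proof you would need to carry out the elimination (or an equivalent argument) showing the constraint is genuinely linear, and incorporate the $L_{k+1}$-projections into the coherence step.
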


\begin{proof}
We will prove the statement first for $k=3$.

After choosing a suitable basis, we can suppose that $\mathcal X$ is a set of lines whose $i$-th line $ \ell_i$ is spanned by the $(2i-1)$-th and $2i$-th column of a matrix of the form
$$
\left(
\begin{array}{cc|cc|cc|cc|c}
1 & 0 & 0 & 0 & 0 & 0 & 1 & 0 & \ldots \\
0 & 1 & 0 & 0 & 0 & 0 & 0 & 1 &  \ldots \\
0 & 0 & 1 & 0 & 0 & 0 & 1 & 0 & \ldots \\
0 & 0 & 0 & 1 & 0 & 0 & 0 & 1 & \ldots \\
0 & 0 & 0 & 0 & 1 & 0 & 1 & 0 &  \ldots \\
0 & 0 & 0 & 0 & 0 & 1 & 0 & 1 &  \ldots \\
\end{array}
\right),
$$
and that the projection from $\ell_i$, $i \in \{1,2,3,4\}$, is contained in a Desarguesian spread.

By hypothesis, the projection from $\ell_2$ and $\ell_3$ are partial spreads contained in Desarguesian spreads whose partial spread sets contain the zero matrix and the identity matrix. Hence, by Lemma~\ref{spreadsetlemma}, the above matrix is of the form
$$
\left(
\begin{array}{cc|cc|cc|cc|cc|c}
1 & 0 & 0 & 0 & 0 & 0 & 1 & 0 & 1 & 0 & \ldots \\
0 & 1 & 0 & 0 & 0 & 0 & 0 & 1 & 0 & 1 &  \ldots \\
0 & 0 & 1 & 0 & 0 & 0 & 1 & 0 & x_1 & a_1x_2 & \ldots \\
0 & 0 & 0 & 1 & 0 & 0 & 0 & 1 & x_2 & x_1+a_2x_2 & \ldots \\
0 & 0 & 0 & 0 & 1 & 0 & 1 & 0 & y_1 & b_1y_2 &  \ldots \\
0 & 0 & 0 & 0 & 0 & 1 & 0 & 1 & y_2 & y_1+b_2y_2 &  \ldots \\
\end{array}
\right),
$$
where the columns are given by $(x_1,x_2) \in T$, for some subset $T \subseteq {\mathbb F}_q^2$ of size $|\mathcal X|-4$. Observe that $y_1$ and $y_2$ are functions of $x_1$ and $x_2$, so $y_i=y_i(x_1,x_2)$ for $i \in \{1,2\}$.

To prove $\mathcal X$ is contained in a Desarguesian spread, we want to show that $a_1=b_1$ and $a_2=b_2$, i.e. the projections from $\ell_2$ and $\ell_3$ are onto {\em coherent} Desarguesian spreads. In fact in some case we shall prove, equivalently, that the projections from $\ell_1$ and $\ell_4$ are onto coherent Desarguesian spreads.

By hypothesis, the projection from $\ell_1$ is contained in a Desarguesian spread. Observe that $\ell_4$ is projected onto a line corresponding to the identity matrix in the partial spread set, $\ell_2$ is projected onto a line corresponding to the zero matrix and $\ell_3$ is projected onto $\pi_{\infty}$. Hence, by Lemma~\ref{spreadsetlemma},
$$
\left\{ \left( \begin{array}{cc}  y_1 & b_1y_2  \\y_2 & y_1+b_2y_2 \end{array} \right)\left( \begin{array}{cc}  x_1 & a_1x_2  \\x_2 & x_1+a_2x_2 \end{array} \right)^{-1} \ | \ (x_1,x_2) \in T \right\}
$$
is equal to  
$$
\left\{ \left( \begin{array}{cc}  z_1 & c_1z_2  \\z_2 & z_1+c_2z_2 \end{array} \right) \ | \ (x_1,x_2) \in T \right\}
$$
where $z_i=z_i(x_1,x_2)$, for $i \in \{1,2\}$. 

Therefore,
$$
\begin{array}{rcl}
y_1 & = & x_1z_1+c_1x_2z_2, \\ 
y_2 & = & x_1z_2+x_2(z_1+c_2z_2) \\
b_1y_2 & = &a_1x_2z_1+(x_1+a_2x_2)c_1z_2,\\
 y_1+b_2y_2 & = & a_1x_2z_2+(x_1+a_2x_2)(z_1+c_2z_2).
\end{array}
$$

If $z_2=0$ then the projection of $\mathcal X \setminus \{\ell_1,\ell_2\}$ from $\ell_1$ has at most $q$ lines, so $\mathcal X$ has size at most $q+2$, a contradiction. Hence, we have that $z_2 \neq 0$.

Thus, eliminating $y_1,y_2$ and $z_1$, these equations imply
$$
((a_1-b_1)(c_2-b_2)+(a_2-b_2)(b_1-c_1))x_1=((c_1a_2-c_2a_1)(b_2-a_2)+(c_1-a_1)(b_1-a_1))x_2.
$$

If not both of the coefficients of $x_1$ and $x_2$ are zero then $x_1$ and $x_2$ satisfy a linear relation which implies $\mathcal X \setminus \{\ell_2,\ell_3\}$  has at most $q$ lines. This implies that $\mathcal X$ has at most $q+2$ lines, again a contradiction.

Thus, both are zero and we can solve for $c_1$ and $c_2$ in terms of $a_1,a_2,b_1,b_2$. Explicitly the system is 
$$
\begin{array}{rcl}
(b_2-a_2)c_1+(a_1-b_1)c_2 & = & b_2a_1-a_2b_1,\\
((b_1-a_1)+a_2(b_2-a_2)c_1+a_1(a_2-b_2)c_2 & = &a_1(b_1-a_1).
\end{array}
$$
If $a_1=b_1$ and $a_2=b_2$ then the projections from $\ell_2$ and $\ell_3$ are onto coherent Desarguesian spreads, which is what we want to prove. If not then $c_1$ and $c_2$ have a unique solution, since otherwise $(a_1-b_1)/(a_2-b_2)$ is a root of $X^2+a_2X-a_1$, which, as we already observed in Lemma~\ref{spreadsetlemma}, is an irreducible polynomial in ${\mathbb F}_q[X]$.

By hypothesis, the projection from $\ell_4$ is also contained in a Desarguesian spread.  To obtain the projection from $\ell_4$, we project the elements of $\mathcal X$ onto $X_1=X_2=0$. Observe that $\ell_1$ is projected onto a line corresponding to the identity matrix in the partial spread set, $\ell_2$ is projected onto a line corresponding to the zero matrix, and $\ell_3$ is projected onto $\pi_{\infty}$. The line of $\mathcal X$ parameterised by $(x_1,x_2) \in T$ is projected onto the line spanned by the columns of 
$$
\left(
\begin{array}{cc}
 0 & 0  \\
0 & 0 \\
1- x_1 & -a_1x_2  \\
- x_2 & 1-x_1-a_2x_2 \\
1- y_1 &- b_1y_2 \\
- y_2 & 1-y_1-b_2y_2  \\
\end{array}
\right).
$$
Now, by Lemma~\ref{spreadsetlemma}, we have that
$$
\left\{ \left( \begin{array}{cc}  1-y_1 & -b_1y_2  \\-y_2 & 1-y_1-b_2y_2 \end{array} \right)\left( \begin{array}{cc}  1-x_1 & -a_1x_2  \\ -x_2 & 1-x_1-a_2x_2 \end{array} \right)^{-1} \ | \ (x_1,x_2) \in T \right\}
$$
is equal to  
$$
\left\{ \left( \begin{array}{cc}  w_1 & d_1w_2  \\w_2 & w_1+d_2w_2 \end{array} \right) \ | \ (x_1,x_2) \in T \right\}
$$
where $w_i=w_i(x_1,x_2)$, for $i \in \{1,2\}$.
Employing exactly the same argument as in the projection from $\ell_1$, 
$$
((a_1-b_1)(d_2-b_2)+(a_2-b_2)(b_1-d_1))(1-x_1)=((d_1a_2-d_2a_1)(b_2-a_2)+(d_1-a_1)(b_1-a_1))(-x_2).
$$
If at least one of the coefficients of $1-x_1$ and $-x_2$ is non-zero then $x_1$ and $x_2$ satisfy an affine linear relation which implies $\mathcal X \setminus \{\ell_2,\ell_3\}$  has at most $q$ lines. This implies that $|\mathcal X| \leqslant q+2$.

Thus, both are zero and we can solve for $d_1$ and $d_2$ in terms of $a_1,a_2,b_1,b_2$ and conclude that $c_1=d_1$ and $c_2=d_2$.

Hence, the projections from $\ell_1$ and $\ell_4$ are coherent from which it follows that $\mathcal X$ is contained in a Desarguesian spread.

We now prove the statement for $k\geqslant 4$. As in the case $k=3$ we can suppose, after a suitable change of basis, that $\mathcal X$ is a set of lines whose $i$-th line $ \ell_i$ is spanned by the $(2i-1)$-th and $2i$-th column of a matrix of the form
$$
\left(
\begin{array}{cc|cc|cc|cc|cc|cc|c}
1 & 0 & 0 & 0 &  \ldots & \ldots & 0 & 0 &1 & 0 & 1 & 0 & \ldots \\
0 & 1 & 0 & 0 &  \ldots & \ldots & 0 & 0 & 0 & 1 & 0 & 1 &  \ldots \\
0 & 0 & 1 & 0 &  \ldots &  \ldots & 0 & 0 & 1 & 0 & x_{11} & a_{11}x_{12} & \ldots \\
0 & 0 & 0 & 1 &  \ddots &  \ldots & 0 & 0 &0 & 1 & x_{12} & x_1+a_{12}x_{12} & \ldots \\
\vdots & \vdots &  \ldots &  \ddots & \ddots & \ddots & \ddots & \vdots & \vdots & \vdots & \vdots& \vdots \\
\vdots & \vdots &  \ldots &  \ldots & \ldots & \ddots & \ddots & \vdots  & \vdots & \vdots & \vdots & \vdots\\
0 & 0 & 0 & 0 &  \ldots &  \ldots & 1 & 0 & 1 & 0 & x_{k-1,1} & a_{k-1,1}x_{k-1,2} &  \ldots \\
0 & 0 & 0 & 0 & \ldots &  \ldots &  0 & 1 & 0 & 1 &x_{k-1,2} & x_{k-1,1}+a_{k-1,2}x_{k-1,2} &  \ldots \\
\end{array}
\right),
$$
where we suppose that the projection hypothesis is satisfied by the lines in the set $L=\{\ell_1,\ldots,\ell_{k+1}\}$. Let $M$ be any $(k-3)$-subset of  $L\setminus \{\ell_1,\ell_{k+1}\}$. The projection onto an arc of lines of $\mathrm{PG}(5,q)$ from $M$, satisfies that hypothesis of the theorem for the lines of $L \setminus M$ for $k=3$. Therefore, we have that all the spread sets are coherent and that the arc is contained in a Desarguesian spread.  
\end{proof}

We now reformulate Theorem~\ref{Desproj} for codes, slightly weakening the hypothesis to make it more readable. By Lemma~\ref{projlemma}, projecting an arc $\mathcal X$ of $(h-1)$-dimensional spaces of $\mathrm{PG}(kh-1,q)$, from a fixed $(h-1)$-dimensional subspace, is an arc of $(h-1)$-dimensional spaces of $\mathrm{PG}((k-1)h-1,q)$. Therefore, an additive $(n,q^{kh},n-k+1)_{q^h}$ additive MDS code projects onto an $(n,q^{(k-1)h},n-k+1)_{q^h}$ additive MDS code. Some authors call this process {\em shortening}. This should not be confused with {\em truncating} a code which is simply deleting some coordinates from all the codewords, which in our case is equivalent to taking a subset of $\mathcal X$. Blokhuis and Brouwer \cite[Proposition 3.1]{BB2004} proved that if all the truncations of an additive code over ${\mathbb F}_4$ are even-dimensional then the code is in fact linear over ${\mathbb F}_4$. That statement can be compared to the following theorem, which follows directly from Theorem~\ref{Desproj}. 

\begin{theorem} \label{projcodesthm}
If $n \geqslant q+k \geqslant q+3$ and $k+1$ of the projections of an $(n,q^{2k},n-k+1)_{q^2}$ additive MDS code $C$, which is linear over ${\mathbb F}_q$, are linear over ${\mathbb F}_{q^2}$ then $C$ is linear over ${\mathbb F}_{q^2}$. 
\end{theorem}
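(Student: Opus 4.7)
The plan is to translate the hypothesis of Theorem~\ref{projcodesthm} into the geometric language of the previous sections and then invoke Theorem~\ref{Desproj} directly. Taking $h=2$, Theorem~\ref{arcMDS} associates to $C$ an arc $\mathcal X$ of $n$ lines in $\mathrm{PG}(2k-1,q)$, and shortening $C$ at a single coordinate corresponds, via Lemma~\ref{projlemma}, to projecting the remaining elements of $\mathcal X$ from the line $\ell \in \mathcal X$ indexing that coordinate. This shortened MDS code is linear over $\mathbb{F}_{q^2}$ if and only if the projected arc lies in a Desarguesian spread of $\mathrm{PG}(2k-3,q)$, since $\Psi^{-1}$ then identifies it with an arc of points in $\mathrm{PG}(k-2,q^2)$, and conversely a Desarguesian choice of basis turns such an arc into a linear $\mathbb{F}_{q^2}$-code. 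So the hypothesis supplies a subset $\mathcal S \subseteq \mathcal X$ of size $k+1$ such that, for each $\ell \in \mathcal S$, the projection of $\mathcal X$ from $\ell$ is contained in a Desarguesian spread of $\mathrm{PG}(2k-3,q)$.

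Next I would upgrade this to the stronger-looking hypothesis of Theorem~\ref{Desproj}: for every $(k-2)$-subset $\mathcal A \subseteq \mathcal S$, the projection of $\mathcal X \setminus \mathcal A$ from the span of $\mathcal A$ is contained in a Desarguesian spread of $\mathrm{PG}(3,q)$. Fix such an $\mathcal A$ and choose any $\ell_0 \in \mathcal A$. Projecting first from $\ell_0$ embeds $\mathcal X \setminus \{\ell_0\}$ in some Desarguesian spread $\mathcal D'$ of $\mathrm{PG}(2k-3,q)$. Under $\Psi^{-1}$, $\mathcal D'$ is the point set of $\mathrm{PG}(k-2,q^2)$ and the images of the remaining $k-3$ lines of $\mathcal A$ are $k-3$ of its points. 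Projecting from those $k-3$ points inside $\mathrm{PG}(k-2,q^2)$ lands in a $\mathrm{PG}(1,q^2)$, and applying $\Psi$ to this line gives a Desarguesian spread of $\mathrm{PG}(3,q)$ that, by construction, contains the projection of $\mathcal X \setminus \mathcal A$ from $\mathcal A$.

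With the hypothesis of Theorem~\ref{Desproj} verified, and with the length condition $n \geqslant q+k$ and $k \geqslant 3$ (equivalent to $q+k \geqslant q+3$) as given, Theorem~\ref{Desproj} forces $\mathcal X$ itself to lie in a Desarguesian spread of $\mathrm{PG}(2k-1,q)$. Applying $\Psi^{-1}$ then exhibits $\mathcal X$ as an arc of points in $\mathrm{PG}(k-1,q^2)$, so $C$ is (equivalent to) a linear $[n,k,n-k+1]_{q^2}$ MDS code, hence linear over $\mathbb{F}_{q^2}$. The step requiring genuine care is the middle paragraph: the bootstrap from a single-line projection hypothesis to the $(k-2)$-subset hypothesis of Theorem~\ref{Desproj} rests on the fact that $\Psi$ intertwines projection from spread elements in $\mathrm{PG}(2k-1,q)$ with projection from points in $\mathrm{PG}(k-1,q^2)$, so that successive projections from spread elements preserve containment in a Desarguesian spread. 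Once that observation is in place, everything else is a direct appeal to Theorem~\ref{Desproj}.
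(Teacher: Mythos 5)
Your proposal is correct and follows essentially the same route as the paper: translate the hypothesis into containment of the projected arcs in Desarguesian spreads and then invoke Theorem~\ref{Desproj}. The only difference is that the paper compresses your middle paragraph into a single ``therefore'' (that projecting from a $(k-2)$-subset of $\mathcal S$ again lands in a Desarguesian spread), whereas you justify this step explicitly via the fact that the field reduction map $\Psi$ intertwines projection from spread elements with projection from points of $\mathrm{PG}(k-2,q^2)$ --- a worthwhile elaboration, but the same argument.
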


\begin{proof}
Let $\mathcal X$ be the arc of $(h-1)$-dimensional subspaces obtained from the columns of a generator matrix for $C$. By hypothesis, there is a subset $\mathcal S$ of $\mathcal X$ of size $k+1$, such that the projection of $\mathcal X$ from an element of $\mathcal S$ is contained in a Desarguesian spread. Therefore, if we project $\mathcal X$ from any $(k-2)$-subet of $S$ then this projection will be contained in a Desarguesian spread. By Theorem~\ref{Desproj}, this implies $\mathcal X$ is contained in a Desarguesian spread, which implies $C$ is linear over ${\mathbb F}_{q^2}$. 
\end{proof}

\section{Additive MDS codes over small finite fields}

The computational results were obtained in two ways. 

{\bf Method 1.} The first method used to classify arcs in $\mathrm{PG}(kh-1,q)$ started by classifying partial spreads in $\mathrm{PG}(2h-1,q)$ up to $\mathrm{P}\Gamma \mathrm{L}(2h,q)$ isomorphism, building on the work of Soicher \cite{Soicher}. These classifications were obtained using GAP \cite{GAP2020}, and in particular, the packages GRAPE \cite{GRAPE} and FinInG \cite{fining}. Of particular use was the
functionality in GRAPE to classify the maximal cliques of given size in a graph, 
up to the action of a given subgroup of the automorphism group of that
graph, as well as Linton's program {\tt SmallestImageSet} (included with GRAPE),
which determines the lexicographically least set in a $G$-orbit of sets, without
explicitly computing the $G$-orbit; see \cite{Linton2004}. Once a classification of arcs in $\mathrm{PG}((k-1)h-1,q)$ was obtained recursively, an exhaustive search was undertaken to determine those arcs of size $n$ which ``lifted" to arcs of size $n+1$ in $\mathrm{PG}(kh-1,q)$. These lifts were then checked for isomorphic copies.

{\bf Method 2.} The other method used a breadth first algorithm with isomorphism rejection under the collineation group of the ambient projective space at every step. This type of algorithm is described  in \cite[Section 9.6]{BBFKKW2006}, (where it is called {\it snakes and ladders}). The algorithms were implemented in GAP based on some of the functionality provided by FinInG, albeit with a lot of tweaking. The arcs are built up from the empty set adding one subspace at every step. For each size $n$, the algorithm stores a list of orbit representatives of arcs of size $n$, together with a Schreier vector (allowing constructive recognition) and the stabiliser of the representative. At the next step, for each representative, the algorithm computes the orbits on the set of all possible extensions (to an arc of size $n+1$) of the representative under the stabiliser of the representative.  This gives a list of all possible extensions of size $n+1$ of all representatives of arcs of size $n$. The list of arcs of size $n+1$ thus obtained is then reduced by removing the representatives from the list which belong to the same orbit as a representative which comes earlier in the list. This step uses the Schreier vectors which were stored at the previous steps. In the process the stabilisers of the new representatives are also computed.

\bigskip

\hrule

\bigskip

{Case $q^h=2^2$.}

\begin{center}
\begin{table}[h] 
\begin{tabular}{|r|c|c|c|} \hline
size & 4 & 5 & 6 \\ \hline
number of arcs of points of $\mathrm{PG}(2,4)$ & 1 &1 & 1 \\
number of arcs of lines of $\mathrm{PG}(5,2)$ & 1 &1 & 1 \\ \hline
\end{tabular}
\vspace{0.3cm}
\caption{The classification of arcs of lines of $\mathrm{PG}(5,2)$.}\label{codes4}
\end{table}
\end{center}
\vspace{-1.4cm}
The classification in Table~\ref{codes4} was obtained using Method 2. Suppose there exists a $(n,4^k,n-k+1)_4$ additive MDS code. By Theorem~\ref{dualMDS} and the trivial upper bound, we can assume that $k \leqslant \frac{1}{2}n \leqslant \frac{1}{2}(k+3)$ which gives $k \leqslant 3$. 
Thus, from {\sc Table} \ref{codes4}, we conclude that all additive MDS codes over ${\mathbb F}_4$ are equivalent to a linear code over ${\mathbb F}_4$. Indeed, it was already observed in \cite{KO2016}, that all MDS codes over an alphabet of size $4$ are equivalent to linear codes. This also follows from the Blokhuis-Brouwer theorem \cite[Proposition 3.1]{BB2004} mentioned before.

\bigskip

\hrule

\bigskip

{Case $q^h=2^3$.}

\begin{center}
\begin{table}[h] 
\begin{tabular}{|r|c|c|c|c|c|c|c|} \hline
size & 4 & 5 & 6 & 7 & 8 & 9 & 10\\ \hline
number of arcs of points of $\mathrm{PG}(2,8)$ & 1 & 1 & 3 & 2 & 2 & 2 & 1 \\
number of arcs of planes of $\mathrm{PG}(8,2)$ & 1 & 2 & 4 & 2 & 2 & 2 & 1 \\ \hline
\end{tabular}
\vspace{0.3cm}
\caption{The classification of arcs of planes of $\mathrm{PG}(8,2)$.}\label{codes8}
\end{table}
\end{center}
\vspace{-1.4cm}
The classification in {\sc Table}~\ref{codes8} was obtained using Method 2. {\sc Table}~\ref{codes8} implies that the longest $(n,8^3,n-2)_8$ additive MDS code which is not $\mathrm{P}\Gamma\mathrm{L}$-equivalent to a linear MDS code is a $(6,8^3,4)_8$ code. It is unique, up to isomorphism, and has the following generator matrix

$$
\left(
\begin{array}{ccc|ccc|ccc|ccc|ccc|ccc}
1 & 0 & 0 & 0 & 0 & 0 & 0 & 0 & 0 & 1 & 0 & 0 & 1 & 0 & 0 & 1 & 0 & 0\\
0 & 1 & 0 & 0 & 0 & 0 & 0 & 0 & 0 & 0 & 1 & 0 & 0 & 1 & 0 & 0 & 1 & 0\\
0 & 0 & 1 & 0 & 0 & 0 & 0 & 0 & 0 & 0 & 0 & 1 & 0 & 0 & 1 & 0 & 0 & 1\\
0 & 0 & 0 & 1 & 0 & 0 & 0 & 0 & 0 & 1 & 0 & 0 & 1  & 1 & 1 & 1 & 1 &  0  \\
0 & 0 & 0 & 0 & 1 & 0 & 0 & 0 & 0 & 0 & 1 & 0 & 1 & 0 & 0 &  0 & 1 &  1  \\
0 & 0 & 0 & 0 & 0 & 1 & 0 & 0 & 0 & 0 & 0 & 1 & 1 & 1 & 0 &  1 & 0 &  0  \\
0 & 0 & 0  & 0 & 0 & 0 & 1 & 0 & 0 & 1 & 0 & 0 & 0 & 1 & 0 &  0 & 1 &  1  \\
0 & 0 & 0 & 0 & 0 & 0 & 0 & 1 & 0 & 0 & 1 & 0 & 0 & 1 &  1 &  1 & 1 &  1 \\
0 & 0 & 0  & 0 & 0 & 0 & 0 & 0 & 1 & 0 & 0 & 1 & 1 & 0 & 1 &  0 & 1 &  0  \\
\end{array}
\right).
$$

In \cite{KO2016}, it was proven that all $(n,8^k,d)_8$ MDS codes with $d \geqslant 5$ are equivalent to linear codes and all $(n,8^k,4)_8$ MDS codes with $k \geqslant 4$ are equivalent to linear codes. There are precisely $39$ equivalence classes of $(6,8^3,4)_8$ codes. From the table above, assuming $\mathrm{P}\Gamma\mathrm{L}$-inequivalence implies inequivalence, we conclude that three of these contain a linear code and one contains an additive non-linear code. We conclude that the only additive MDS codes over ${\mathbb F}_8$, which are not $\mathrm{P}\Gamma\mathrm{L}$-equivalent to a linear code, are the $(5,8^3,3)_8$ (its dual which is a $(5,8^2,4)_8$ code coming from a partial spread of planes of $\mathrm{PG}(5,2)$) and the $(6,8^3,4)_8$ additive code, which must be necessarily equivalent to its dual.

\bigskip

\hrule

\bigskip

{Case  $q^h=3^2$.}

\begin{center}
\begin{table}[h] 
\begin{tabular}{|r|c|c|c|c|c|c|c|} \hline
size & 4 & 5 & 6 & 7 & 8 & 9 & 10\\ \hline
$\#$ of arcs of points of $\mathrm{PG}(2,9)$ & 1 & 2 & 6 & 3 & 2 & 1 & 1 \\
$\#$  of arcs of lines of $\mathrm{PG}(5,3)$ & 1 & 4 & 13 & 4 & 3 & 1 & 1 \\ \hline
\end{tabular}
\vspace{0.3cm}
\caption{The classification of arcs of lines of $\mathrm{PG}(5,3)$.} \label{codes9}
\end{table}
\end{center}
\vspace{-1.4cm}
The classification in {\sc Table}~\ref{codes9} was obtained using Method 1 for $n \geqslant 8$ and then checked and completed using Method 2. By {\sc Table}~\ref{codes9}, the longest $(n,9^3,n-2)_9$ additive MDS code which is not $\mathrm{P}\Gamma\mathrm{L}$-equivalent to a linear MDS code is a $(8,9^3,6)_9$ code. It is unique, up to $\mathrm{P}\Gamma\mathrm{L}$-equivalence, and has the following generator matrix

$$
\left(
\begin{array}{cc|cc|cc|cc|cc|cc|cc|cc}
1 & 0 & 0 & 0 & 0 & 0 & 1 & 0 & 1 & 0 & 1 & 0 & 1 & 0 & 1 & 0\\
0 & 1 & 0 & 0 & 0 & 0 & 0 & 1 & 0 & 1 & 0 & 1 & 0 & 1  & 0 & 1\\
0 & 0 & 1 & 0 & 0 & 0 & 1 & 0 & 2 & 2 & 0 & 2 & 2 & 0  & 1 & 2\\
0 & 0 & 0 & 1 & 0 & 0 & 0 & 1 & 0 & 2 & 1 & 1 & 1 & 2 & 1 & 0 \\
0 & 0 & 0 & 0 & 1 & 0 & 1 & 0 & 0 & 1 & 2 & 0 & 1 & 1 & 2 & 1\\
0 & 0 & 0 & 0 & 0 & 1 & 0 & 1 & 2 & 1 & 2 & 2 & 2 & 0 & 0 & 2\\
\end{array}
\right).
$$
As a code $C$ over ${\mathbb F}_9$, this implies that
$$
C=\{(x,y,z,x+y+z,x+e^3 y +y^3+e^3 z-e^2z^3,
x+e^3 y -y^3+e^3 z+e^2 z^3,$$
$$
x+e y -e^2 y^3+e z+ z^3,
x+e y +e^2 y^3+e z-z^3) \ | \ x,y,z \in {\mathbb F}_9 \},
$$
where $e$ is a primitive element of ${\mathbb F}_{9}$ satisfying $e^2=e+1$. It is possible to verify directly that three coordinates of the elements of $C$ are zero if and only if $x=y=z=0$, see \cite[Proposition 5.1]{Gamboa2020}. Furthermore, this code has the property that $C \leqslant C^{\perp_a}$, so Theorem~\ref{ketkarthm} implies that there is a $[\![8,2,4]\!]_3$ quantum MDS code, which does not come from a linear code over ${\mathbb F}_9$ contained in its Hermitian dual. There is an example of a $[\![8,2,4]\!]_3$ quantum MDS code, which does come from a linear code over ${\mathbb F}_9$ contained in its Hermitian dual, see \cite{GR2015}.

To determine the additive non-linear MDS codes over ${\mathbb F}_9$ for $k\geqslant 4$, by Theorem~\ref{dualMDS}, we can assume that $k\leqslant \frac{1}{2}n$. By Theorem~\ref{Desproj}, if we assume that the projection onto $k=3$ has size at most $6$, then we have a contradiction given by the inequalities
$$
4 \leqslant k \leqslant \tfrac{1}{2}n 
$$
and
$$
n-(k-3) \leqslant 6.
$$
Thus, to find additive non-linear MDS codes over ${\mathbb F}_9$ for $k\geqslant 4$, by Theorem~\ref{Desproj}, we can assume that at least one of the projections down to $k=3$ is onto one of the examples of either the $(7,9^3,5)_9$  or the $(8,9^3,6)_9$ additive non-linear MDS code. 

Method 1 was employed to reveal that there is no additive $(8,9^4,5)_9$ (resp. $(9,9^4,6)_9$) code which projects onto the $(7,9^3,5)_9$ (resp. $(8,9^3,6)_9$) code, so we conclude that the only additive non-linear MDS codes over ${\mathbb F}_9$, with $k \geqslant 3$ and $d \geqslant 4$, are the $(7,9^3,5)_9$ and the $(8,9^3,6)_9$ additive non-linear MDS code and their duals, which are $(7,9^4,4)_9$ and $(8,9^5,4)_9$ additive non-linear MDS codes.  

The $(n,9^2,n-1)_9$ additive non-linear MDS codes come from partial spreads of $\mathrm{PG}(3,3)$ and are tabulated in the following table. The ones that do not correspond to an arc of points in $\mathrm{PG}(1,9)$ correspond to additive non-linear codes. The duals of these codes are additive non-linear $(n,9^{n-2},3)_9$ MDS codes. These additive non-linear codes are not $\mathrm{P}\Gamma\mathrm{L}$-equivalent to a linear code.

\begin{center}
\begin{table}[h] 
\begin{tabular}{|r|c|c|c|c|c|c|c|} \hline
size & 4 & 5 & 6 & 7 & 8 & 9 & 10\\ \hline
$\#$ of arcs of points of $\mathrm{PG}(1,9)$ & 2 & 2 & 2 & 1 & 1 & 1 & 1 \\
$\#$  of arcs of lines of $\mathrm{PG}(3,3)$ & 3 & 4 & 5 & 4 & 3 & 2 & 2 \\ \hline
\end{tabular}
\vspace{0.3cm}
\caption{The classification of arcs of lines of $\mathrm{PG}(3,3)$.} \label{codes9b}
\end{table}
\end{center}

The classification in {\sc Table}~\ref{codes9b} was obtained using Method 2.

\bigskip

\hrule

\bigskip

{Case $q^h=4^2$.}

\begin{center}
\begin{table}[h] 
\begin{tabular}{|r|c|c|c|c|c|c|c|c|c|c|c|c|c|c|c|c|} \hline
size & 5 & 6 & 7 & 8 & 9 & 10 & 11 \\ \hline
$\#$ of arcs of $\mathrm{PG}(2,16)$ & 3 & 22 & 125 & 865 & 1534 & 1262 & 300\\
$\#$ of line-arcs of $\mathrm{PG}(5,4)$ & 10 & 360 & 8294 & 15162 & 2869 & 1465 & 301 \\ \hline
size & 12 & 13 & 14 & 15 & 16 & 17 & 18 \\ \hline
$\#$ of arcs of $\mathrm{PG}(2,16)$ &   159 & 70 & 30 & 9 & 5 & 3 & 2 \\
$\#$ of line-arcs of $\mathrm{PG}(5,4)$ &  159 & 70 & 30 & 9 & 5 & 3 & 2 \\\hline
\end{tabular}
\vspace{0.3cm}
\caption{The classification of arcs of lines of $\mathrm{PG}(5,4)$.} \label{codes16}
\end{table}
\end{center}
\vspace{-1.3cm}
The classification in {\sc Table}~\ref{codes16} was obtained using Method 1 for $n\geqslant 11$ employing Theorem~\ref{Desproj} and then checked and completed using Method 2. The table indicates that the longest $(n,16^3,n-2)_{16}$ additive MDS code which is not $\mathrm{P}\Gamma\mathrm{L}$-equivalent to a linear MDS code is an $(11,16^3,9)_{16}$ code. It is unique, up to $\mathrm{P}\Gamma\mathrm{L}$-equivalence, and has the following generator matrix

$$
\left(
\begin{array}{cc|cc|cc|cc|cc|cc|cc|cc|cc|cc|cc}
1 & 0 & 0 & 0 & 0 & 0 & 1 & 0 & 1 & 0 & 1 & 0 & 1 & 0 & 1 & 0 & 1 & 0 & 1 & 0 & 1 & 0\\
0 & 1 & 0 & 0 & 0 & 0 & 0 & 1 & 0 & 1 & 0 & 1 & 0 & 1 & 0 & 1 & 0 & 1 & 0 & 1 & 0 & 1 \\
0 & 0 & 1 & 0 & 0 & 0 & 1 & 0 & 0 & e & 1 & e^2 & 1 & 1 & 0 & 1 & e & e^2 & e & 1 & e & e \\
0 & 0 & 0 & 1 & 0 & 0 & 0 & 1 & e & e^2 & e & 0 & e^2 & 1 & 1 & e^2 & e^2 & e^2 & 0 & e & e & 1 \\
0 & 0 & 0 & 0 & 1 & 0 & 1 & 0 & 1 & e & e & e & e^2 & 0 & 0 & e & 0 & 1 & e & e^2 & e & 1 \\
0 & 0 & 0 & 0 & 0 & 1 & 0 & 1 & e & 0 & 0 & e & 1 & e & e & e^2 & 1 & e^2 & e & e & 1 & 1 \\
\end{array}
\right),
$$
where $e$ is a primitive element of ${\mathbb F}_{4}$ satisfying $e^2=e+1$.

As a code $C$ over ${\mathbb F}_{16}$ this implies that
$$
C=\{(x,y,z,x+y+z,x+\theta^{12} y +\theta ^4 y^4 +\theta z+\theta^{12}z^4,
x+\theta^2 y +\theta^7y^4+ z+\theta^7 z^4,
x+\theta^4 y^4 +\theta^2  z+ \theta^4z^4,$$
$$
x+\theta^{14}y+\theta^7 y+ \theta^{12}z+\theta^4z^4,
x+\theta y+\theta^9 y^4+\theta^{14} z+\theta^7 z^4,
x+\theta^2y^4+z+\theta^{14}z^4,$$
$$
x+\theta^{14}y+\theta^4y^4+\theta^{12}z+\theta^7z^4) 
\ | \ x,y,z \in {\mathbb F}_{16} \},
$$
where $\theta$ is a primitive element of ${\mathbb F}_{16}$ satisfying $\theta^4=\theta+1$. 

Method 1 reveals that the unique $(11,16^3,9)_{16}$ code does not lift to a $(12,16^4,9)_{16}$ code. We were not able to check the 203 additive non-linear  $(10,16^3,8)_{16}$ codes to see if they lift or not.

\begin{theorem}
The longest additive MDS codes over ${\mathbb F}_{16}$, which are linear over ${\mathbb F}_4$, are linear over ${\mathbb F}_{16}$.
\end{theorem}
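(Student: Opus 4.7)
The plan is to prove the theorem by induction on $k$, with Table~\ref{codes16} serving as the base case, Theorem~\ref{projcodesthm} supplying the inductive step, and Theorem~\ref{dualMDS} providing a duality reduction.

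First I would establish the base case $k=3$ from Table~\ref{codes16}. For every $n \in \{12,13,\ldots,18\}$ the table records the same number of $\mathrm{P}\Gamma\mathrm{L}$-classes of arcs of $n$ lines in $\mathrm{PG}(5,4)$ as arcs of $n$ points in $\mathrm{PG}(2,16)$. Since the image of any arc of points of $\mathrm{PG}(2,16)$ under the field-reduction map $\Psi$ is an arc of lines of $\mathrm{PG}(5,4)$ contained in a Desarguesian spread, the equality of these counts forces every arc of $n \geq 12$ lines of $\mathrm{PG}(5,4)$ to be contained in a Desarguesian spread. In code language, every additive $(n,16^6,n-2)_{16}$ MDS code linear over $\mathbb{F}_4$ with $n \geq 12$ is already linear over $\mathbb{F}_{16}$.

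Next I would carry out the induction on $k$. Assume that every additive $(n',16^{2(k-1)},n'-k+2)_{16}$ MDS code linear over $\mathbb{F}_4$ with $n' \geq (k-1)+9$ is linear over $\mathbb{F}_{16}$, and let $C$ be an additive $(n,16^{2k},n-k+1)_{16}$ MDS code linear over $\mathbb{F}_4$ with $n \geq k+9$. Each single-element projection of $C$, obtained by shortening at one coordinate, is an additive $(n-1,16^{2(k-1)},n-k+1)_{16}$ MDS code of length $n-1 \geq (k-1)+9$, which by the inductive hypothesis is linear over $\mathbb{F}_{16}$. Since there are $n \geq k+9 > k+1$ such projections and $n \geq k+4 = q+k$, Theorem~\ref{projcodesthm} applies and shows $C$ itself is linear over $\mathbb{F}_{16}$.

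Finally I would combine the induction with duality. The Reed-Solomon code for $q=16$ and $k=3$ extends to a linear $(18,16^3,16)_{16}$ MDS code, so the longest additive MDS code over $\mathbb{F}_{16}$ linear over $\mathbb{F}_4$ has length at least $18$. Let $C$ be any additive MDS code linear over $\mathbb{F}_4$ of length $n \geq 18$; by Theorem~\ref{dualMDS} we may pass to $C^{\perp}$ if necessary and assume $k \leq n/2$. Then $n-k \geq n/2 \geq 9$, so $n \geq k+9$, and the induction forces $C$ to be linear over $\mathbb{F}_{16}$.

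The main obstacle is arranging the inductive thresholds so that they line up tightly at every level: the critical case is $n=18$, $k=9$, where the inequality $n \geq k+9$ holds with equality, so the hypothesis of Theorem~\ref{projcodesthm} is only narrowly satisfied. The other essential ingredient is the correctness of the (computational) base case recorded in Table~\ref{codes16}, which is what turns the conditional statement provided by Theorem~\ref{projcodesthm} into an unconditional conclusion.
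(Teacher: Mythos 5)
Your argument is correct and rests on the same two pillars as the paper's proof---the computational fact from Table~\ref{codes16} that every arc of at least $12$ lines of $\mathrm{PG}(5,4)$ is contained in a Desarguesian spread, and Theorem~\ref{Desproj}---but you organise them differently. The paper argues in a single step: for a putative $(17,16^k,18-k)_{16}$ code with $k\leqslant 8$ (after dualising via Theorem~\ref{dualMDS}), Lemma~\ref{projlemma} projects the associated arc of lines from any $k-3$ of its elements down to an arc of $17-(k-3)=20-k\geqslant 12$ lines of $\mathrm{PG}(5,4)$, the table places every such projection in a Desarguesian spread, and one application of Theorem~\ref{Desproj} finishes. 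You instead run an induction on $k$, invoking the code-level corollary Theorem~\ref{projcodesthm} at every level. This is valid---the thresholds do line up, since $n\geqslant k+9$ propagates to $n-1\geqslant (k-1)+9$ under shortening and comfortably implies the hypothesis $n\geqslant q+k=k+4$ of Theorem~\ref{projcodesthm}---but it re-derives by induction what Theorem~\ref{Desproj} already supplies for general $k$, its proof having internally reduced $k\geqslant 4$ to $k=3$; the paper's one-shot projection is therefore shorter. Two small remarks. First, the paper's proof of this theorem treats length $n=17$ (the generic maximal length), and defers $n=18$ to the MDS-conjecture section; your closing paragraph only addresses $n\geqslant 18$, although your induction equally covers $n=17$ with $k\leqslant 8$ (then $n-k\geqslant 9$), so you should state that case explicitly rather than leave it implicit. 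Second, your reading of Table~\ref{codes16}---that equal counts of point-arcs of $\mathrm{PG}(2,16)$ and line-arcs of $\mathrm{PG}(5,4)$ for each $n\geqslant 12$ force every such line-arc into a Desarguesian spread---is exactly the deduction the paper makes; it tacitly uses that field reduction is injective on equivalence classes of arcs, which is part of what the computation certifies rather than something proved in the text.
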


\begin{proof}
Suppose there is a $(17,16^k,18-k)_{16}$ additive MDS code over ${\mathbb F}_{16}$, which is linear over ${\mathbb F}_4$. By Theorem~\ref{dualadd}, we can suppose $k \leqslant \frac{1}{2}n$, so $k \leqslant 8$. By Theorem~\ref{arcMDS}, the set of columns of a generator matrix for the code is an arc $\mathcal X$ of lines of $\mathrm{PG}(2k-1,4)$. By Lemma~\ref{projlemma}, this arc projects to an arc of lines of size $17-(k-3)=20-k \geqslant 12$ in $\mathrm{PG}(5,4)$, which by Table~\ref{codes16} is contained in a Desarguesian spread. 
By Theorem~\ref{Desproj}, this implies that $\mathcal X$ is contained in a Desarguesian spread, which implies that the code is linear.
\end{proof}

All $[17,k,18-k]_{16}$ linear MDS codes over ${\mathbb F}_{16}$ have been classified, see \cite{HS2001}. If $k \not\in \{ 3,14\}$ then a $[17,k,18-k]_{16}$ linear MDS code over ${\mathbb F}_{16}$ is a Reed-Solomon code. For $k=3$ there are $[18,3,16]_{16}$ linear MDS codes over ${\mathbb F}_{16}$ which can be obtained by extending the Reed-Solomon codes or by taking the code generated by the matrix whose columns are the points of a Lunelli-Sce hyperoval \cite{LS1964}. These codes shorten to $[17,3,15]_{16}$ linear MDS codes over ${\mathbb F}_{16}$ and dualise to $[18,15,4]_{16}$ linear MDS codes over ${\mathbb F}_{16}$ which truncate to $[17,15,3]_{16}$ MDS codes and project to $[17,14,4]_{16}$ MDS codes.

\section{The MDS conjecture and the quantum MDS conjecture}

The results in the previous section allow us to prove the MDS conjecture in the following cases. Note that the MDS conjecture is known to hold for all codes over ${\mathbb F}_4$ and ${\mathbb F}_8$, see \cite{KO2016}.

\begin{theorem}
The MDS conjecture holds for additive codes over ${\mathbb F}_9$ and for additive codes over ${\mathbb F}_{16}$ which are linear over ${\mathbb F}_4$.
\end{theorem}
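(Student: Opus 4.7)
My plan is to combine duality (Theorem~\ref{dualMDS}), the projection lemma (Lemma~\ref{projlemma}), the main geometrical result (Theorem~\ref{Desproj}), and the explicit classifications in Tables~\ref{codes9}, \ref{codes9b} and \ref{codes16} to force every additive MDS code in the hypothesis to be linear over the ambient field; the known linear MDS conjecture for $\mathbb{F}_9$, and for $\mathbb{F}_{16}$ via \cite{HS2001}, will then close both cases.

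For $q=9$: let $C$ be an additive $(n,9^k,n-k+1)_9$ MDS code and, using Theorem~\ref{dualMDS}, reduce to $k\leqslant n/2$. The case $k=2$ is a partial spread of lines in $\mathrm{PG}(3,3)$, of size at most $q^2+1=10$, and the case $k=3$ is settled by Table~\ref{codes9}, which enumerates arcs of lines in $\mathrm{PG}(5,3)$ of size at most $10$. For $k\geqslant 4$ I would argue by contradiction: assuming $n\geqslant 11$, Lemma~\ref{projlemma} projects $\mathcal{X}$ from any $(k-3)$-subset to an arc of lines in $\mathrm{PG}(5,3)$, which by Table~\ref{codes9} has size at most $10$; combined with $k\leqslant n/2$ this forces $k\in\{4,5,6,7\}$ and $n\leqslant 14$. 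For every such $(k,n)$ with $n\geqslant 11$ the projection size $n-(k-3)$ is at least $9$, so by Table~\ref{codes9} the projection lies in a Desarguesian spread of $\mathrm{PG}(5,3)$; further projecting from one more line then shows that every projection of $\mathcal{X}$ from a $(k-2)$-subset lies in a Desarguesian spread of $\mathrm{PG}(3,3)$, so Theorem~\ref{Desproj} (with base field $\mathbb{F}_3$, for which $n\geqslant 3+k$ is automatic from $n\geqslant 2k$) forces $\mathcal{X}$ itself to be Desarguesian. Hence $C$ is linear over $\mathbb{F}_9$, contradicting the known linear MDS bound $n\leqslant 10$.

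For $q=16$ with $\mathbb{F}_4$-linearity: the template is identical. Duality reduces to $k\leqslant n/2$; for $k=2$ a partial spread in $\mathrm{PG}(3,4)$ has at most $q^2+1=17$ lines; for $k=3$, Table~\ref{codes16} shows that every arc of lines in $\mathrm{PG}(5,4)$ of size at least $12$ is Desarguesian (the line-arc and point-arc tallies in the table coincide from size $12$ onwards), hence $C$ is linear over $\mathbb{F}_{16}$ and the classification of $[n,3,n-2]_{16}$ linear MDS codes in \cite{HS2001} provides the bound $n\leqslant 18$. For $k\geqslant 4$ and $n\geqslant 18$, the projection to $\mathrm{PG}(5,4)$ has size $n-(k-3)\geqslant n/2+3\geqslant 12$, hence is Desarguesian; Theorem~\ref{Desproj} (with base field $\mathbb{F}_4$) combined with the preceding theorem on length-$17$ codes then forces $C$ to be linear over $\mathbb{F}_{16}$, and \cite{HS2001} finishes the argument.

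I expect the main obstacle to be the case-check for $q=9$ and $k\in\{4,5,6,7\}$: one must verify in every potential pair $(k,n)$ with $2k\leqslant n\leqslant k+7$ and $n\geqslant 11$ that the projection size $n-(k-3)$ is indeed at least $9$, so that Table~\ref{codes9} supplies the Desarguesian hypothesis needed by Theorem~\ref{Desproj}. If any pair were to fall outside this range, the Method 1 lifting checks from the previous section would have to be invoked instead.
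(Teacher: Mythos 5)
Your $q=16$ case is essentially the paper's own argument: reduce to $k\leqslant n/2$ by duality, project a putative length-$18$ code to an arc of at least $12$ lines of $\mathrm{PG}(5,4)$, read off from Table~\ref{codes16} that such arcs are contained in a Desarguesian spread, and invoke Theorem~\ref{Desproj} to force linearity over ${\mathbb F}_{16}$ (your extra appeal to the length-$17$ theorem is superfluous for this). Where you genuinely diverge is $q=9$. The paper's proof of this theorem simply cites the complete classification of additive non-linear MDS codes over ${\mathbb F}_9$ obtained in the preceding section, and that classification rests on additional computer searches (the Method~1 checks that no $(8,9^4,5)_9$ or $(9,9^4,6)_9$ code projects onto the non-linear $(7,9^3,5)_9$ or $(8,9^3,6)_9$ codes). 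You instead run the same Desarguesian-projection scheme used for $q=16$: for a counterexample with $4\leqslant k\leqslant n/2$ and $n\geqslant 11$, every projection to $\mathrm{PG}(5,3)$ has size $n-k+3\geqslant 9$ (your case check $k\in\{4,5,6,7\}$, $n\leqslant k+7$ is correct once one uses $n\geqslant\max(11,2k)$), Table~\ref{codes9} shows the arcs of $9$ or $10$ lines are exactly the field-reduced point-arcs, a further projection lands in a Desarguesian spread of $\mathrm{PG}(3,3)$, and Theorem~\ref{Desproj} then forces linearity, contradicting the known linear bound $n\leqslant 10$. This buys a proof of the ${\mathbb F}_9$ case that depends only on the $k\leqslant 3$ classifications (Tables~\ref{codes9} and~\ref{codes9b}) and not on the lifting computations; what it does not recover is the paper's stronger by-product, namely the complete list of additive non-linear MDS codes over ${\mathbb F}_9$, for which those extra searches remain necessary. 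Like the paper, you read off ``contained in a Desarguesian spread'' from the coincidence of the line-arc and point-arc counts in the tables, so your argument inherits exactly the same computational assumptions of completeness.
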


\begin{proof}
In the case of $q=9$ we determined all additive non-linear codes and all are of length $n \leqslant 8$. Since the MDS conjecture has been verified for linear codes, see the survey article \cite{HS2001}, we conclude that the MDS conjecture holds.

For $q=16$, suppose there is a $(18,16^k,19-k)_{16}$ additive MDS code which is linear over ${\mathbb F}_4$. By Theorem~\ref{dualMDS}, we can assume that $k\leqslant 9$. By Theorem~\ref{arcMDS}, there is an arc of lines of $\mathrm{PG}(2k-1,4)$ of size $18$ which projects down to an arc of lines of $\mathrm{PG}(5,4)$ of size $21-k \geqslant 12$. In {\sc Table}~\ref{codes16}, we deduced that the largest arc of lines which is not contained in a Desarguesian spread is of size $11$. Hence, all projections are onto arcs of lines contained in a Desarguesian spread which, by Theorem~\ref{projcodesthm}, implies that the $(18,16^k,19-k)_{16}$ additive MDS code is linear over ${\mathbb F}_{16}$.
\end{proof}

The quantum MDS conjecture, initially mentioned in \cite{KKKS2006}, and again in \cite{HG2019}, states that apart from the case $d=4$ and $q=2^h$, if there is a $[\![n,n-2(d-1),d]\!]_q$ stabiliser MDS code with $d \geqslant 3$ then $n \leqslant q^2+1$. By Theorem~\ref{ketkarthm}, the existence of a $[\![q^2+2,q^2+2-2(d-1),d]\!]_q$ stabiliser MDS code implies the existence of a $(q^2+2,d-1,q^2+4-d)_{q^2}$ additive MDS code. The non-existence of such a code was already known for $q=2$ and here we have ruled out such codes for $q=3$. Thus, we conclude that the quantum MDS conjecture holds for $q \in \{2,3\}$ and if there is a counter-example for $q=4$ then it must come from an additive $(18,16^k,19-k)_{16}$ MDS code which is not linear over ${\mathbb F}_4$.

\section{Conclusions and further work}

One could try and extend Theorem~\ref{projcodesthm} to additive codes over ${\mathbb F}_{q^h}$, for $h \geqslant 3$.

There is a $(14,25^3,12)_{25}$ additive MDS code which is not $\mathrm{P}\Gamma\mathrm{L}$-equivalent to a linear MDS code, see \cite{Gamboa2020}. This corresponds to an arc of lines of $\mathrm{PG}(5,5)$ of size $14$. We have not been able to classify all partial spreads of lines (i.e. arcs of lines) in $\mathrm{PG}(3,5)$ but if this were feasible, it would most likely be feasible to classify the largest arcs of lines in $\mathrm{PG}(5,5)$. Likewise, a classification of partial spreads of planes of $\mathrm{PG}(5,3)$, would allow us to determine the largest arcs of planes in $\mathrm{PG}(8,3)$. If this largest examples turn out to be contained in a Desarguesian spread then we would verify that there are no additive non-linear codes over ${\mathbb F}_{25}$ (resp. ${\mathbb F}_{27}$) which outperform linear codes over ${\mathbb F}_{25}$ (resp. ${\mathbb F}_{27}$).

It is tempting to believe that relaxing the linearity constraint to additivity would allow the discovery of MDS codes which outperform their linear counterparts. However, it appears that this is not the case, as we have confirmed here for small fields. In fact, it appears that if we do not impose linearity then additive codes are not nearly as good as linear codes. This was also the conclusion in the work of \cite{Alderson2006}, \cite{KDO2015} and \cite{KO2016} for smaller alphabets, when one considers MDS codes with no constraints.

\section{Acknowledgments}

We thank Leonard Soicher for his help in classifying partial spreads in $\mathrm{PG}(3,4)$ and for useful comments on drafts of this paper.

\bigskip

{\small Simeon Ball and Guillermo Gamboa}  \\
{\small Departament de Matem\`atiques}, \\
{\small Universitat Polit\`ecnica de Catalunya, Jordi Girona 1-3},
{\small M\`odul C3, Campus Nord,}\\
{\small 08034 Barcelona, Spain} \\
{\small {\tt simeon.michael.ball@upc.edu, guillermo.gamboa@upc.edu}}

  \bigskip
  
Michel Lavrauw\\
Faculty of Engineering and Natural Sciences,\\
Sabanc\i \  University,\\
Istanbul, Turkey\\
{\tt mlavrauw@sabanciuniv.edu}

 

\end{document}